%\documentclass[english,3p,superscriptaddress,prl]{revtex4-1}
%\documentclass[3p, superscriptaddress,prl]{revtex4-1}
%\synctex=1
%\documentclass[prl superscriptaddress, hyperref]{revtex4}
%\documentclass[prl,superscriptaddress]{revtex4}
\documentclass[pra,aps,reprint,a4paper,superscriptaddress,floatfix]{revtex4-2}
\usepackage{times}
\usepackage{graphicx}
\usepackage{epsfig}
\usepackage{subfigure}
\usepackage{amsfonts}
\usepackage{amsmath}
\usepackage{amssymb}
\usepackage{dsfont}
\usepackage{amsthm}% in order to use a black box at the end of proofs
\usepackage{color}
\usepackage[colorlinks=true,linkcolor=blue,citecolor=blue,urlcolor=blue]{hyperref}
\usepackage{comment}
\usepackage{braket}
\usepackage{physics}
\usepackage{multirow}
\usepackage{float}

\usepackage[latin9]{inputenc}
\setcounter{secnumdepth}{3}
\usepackage{amsmath}
\usepackage{amssymb, enumerate}
\usepackage{color}
\usepackage{babel}
\usepackage{graphicx}
\usepackage{epstopdf}
\usepackage{float}
\usepackage{thmtools}
\usepackage{thm-restate}
\usepackage{hyperref}
\usepackage{cleveref}
\usepackage{enumerate}
\usepackage{lipsum}
\usepackage{fixmath}
\usepackage{dsfont}

\usepackage{amsthm}
\usepackage{amsmath}
\usepackage{amssymb}
\usepackage{tikz}
\usepackage{color}
\usetikzlibrary{matrix}
\usepackage{dsfont}
\usepackage{graphicx} 
\usepackage{stackrel}
%\usepackage{multicol} 
%\numberwithin{equation}{section}
\makeatletter

%%%%%%%%%%%%%%%%%%%%%%%%%%%%%% Textclass specific LaTeX commands.

\@ifundefined{textcolor}{}
{%
 \definecolor{BLACK}{gray}{0}
 \definecolor{WHITE}{gray}{1}
 \definecolor{RED}{rgb}{1,0,0}
 \definecolor{GREEN}{rgb}{0,1,0}
 \definecolor{BLUE}{rgb}{0,0,1}
 \definecolor{CYAN}{cmyk}{1,0,0,0}
 \definecolor{MAGENTA}{cmyk}{0,1,0,0}
 \definecolor{YELLOW}{cmyk}{0,0,1,0}
}

%%%%%%%%%%%%%%%%%%%%%%%%%%%%%% User specified LaTeX commands.

\newenvironment{protocol*}[1]
  {
    \begin{center}
      \hrulefill\\
      \textbf{#1}
  }
  {
    \vspace{-1\baselineskip}
    \hrulefill
    \end{center}
  }

\newtheorem{thm}{Theorem}

\newtheorem{pro}[thm]{Proposition}

\theoremstyle{definition}

\newtheorem{lemma}{Lemma}  
\def\bel{\begin{lemma}}
\def\eel{\end{lemma}}
\newtheorem{theorem}{Theorem}

\newtheorem*{proposition*}{Proposition}
\newtheorem{proposition}[theorem]{Proposition}

\newtheorem{definition}[theorem]{Definition}

\makeatother

\begin{document}

\title{Generalised Kochen-Specker Theorem for Finite Non-Deterministic Outcome Assignments}
%\title{Generalisation of Kochen-Specker Theorem to $\{0,1/2,1\}$-Outcome Assignments and Applications}
\author{Ravishankar Ramanathan}
\email{ravi@cs.hku.hk}
\affiliation{Department of Computer Science, The University of Hong Kong, Pokfulam Road, Hong Kong}

%%%%%%%%%%%%%%%%%%%%%

\begin{abstract}
The Kochen-Specker (KS) theorem is a cornerstone result in quantum foundations, establishing that quantum correlations in Hilbert spaces of dimension $d \geq 3$ cannot be explained by (consistent) hidden variable theories that assign a single deterministic outcome to each measurement. Specifically, there exist finite sets of vectors in these dimensions such that no non-contextual deterministic ($\{0,1\}$) outcome assignment is possible obeying the rules of exclusivity and completeness - that the sum of assignments to every set of mutually orthogonal vectors be $\leq 1$ and the sum of value assignments to any $d$ mutually orthogonal vectors be equal to $1$. Another central result in quantum foundations is Gleason's theorem that justifies the Born rule as a mathematical consequence of the quantum formalism. The KS theorem can be seen as a consequence of Gleason's theorem and the logical compactness theorem. In a similar vein, Gleason's theorem also indicates the existence of KS-type finite vector constructions to rule out other finite alphabet outcome assignments beyond the $\{0,1\}$ case. Here, we propose a generalisation of the KS theorem that rules out hidden variable theories with outcome assignments in the set $\{0, p, 1-p, 1\}$ for $p \in [0,1/d) \cup (1/d, 1/2]$. The case $p = 1/2$ is especially physically significant. We show that in this case the result rules out (consistent) hidden variable theories that are fundamentally binary, i.e., theories where each measurement has fundamentally at most two outcomes (in contrast to the single deterministic outcome per measurement ruled out by KS). We present a device-independent application of this generalised KS theorem by constructing a two-player non-local game for which a perfect quantum winning strategy exists (a Pseudo-telepathy game) while no perfect classical strategy exists even if the players are provided with additional no-signaling resources of PR-box type (with marginals in $\{0,1/2,1\}$).

\end{abstract}

%%%%%%%%%%%%%%%%%%%%%

\keywords{}

\maketitle

\textit{Introduction.-}
One of the most striking features of the quantum world is Contextuality \cite{KS67, Bell66, Held16, BCGKL22}, the notion that outcomes cannot be assigned to measurements independently of the particular contexts in which the measurements are realized. Simply put, the measurements of quantum observables may not be thought of as revealing pre-existing properties that are independent of other compatible observables measured on the system. This feature has in recent years gone beyond being a fundamental curiosity and has been established as a fundamental resource in quantum information processing tasks including magic state distillation \cite{HWVE14}, measurement-based quantum computation \cite{Raus13}, semi-device-independent randomness generation \cite{ACCS12, U+13, H+10, our18, ZRLH23}, zero-error information theory \cite{CLMW10}, non-local pseudo-telepathy games \cite{RW04}, as well as most recently in communication complexity scenarios \cite{GSXCM23}.

The phenomenon of contextuality (specifically outcome contextuality) manifests itself in the form of the famous Kochen-Specker (KS) theorem \cite{KS67}. The KS theorem states that for every quantum system belonging to a Hilbert space of dimension greater than two, irrespective of its actual state, a finite set of measurements exists that does not admit a deterministic non-contextual outcome assignment. Here, deterministic indicates that the outcome probabilities only take values in $\{0,1\}$. Non-contextual means that the assignment is made to the individual projectors, independently of the context (the other projectors in the measurements) to which they belong. The KS theorem is also formulated as saying that in Hilbert spaces of dimension greater than two, there exist finite sets of vectors that are not $\{0,1\}$-colorable, wherein a $\{0,1\}$-coloring is such a deterministic non-contextual outcome assignment obeying the natural Kochen-Specker rules that no pair of orthogonal projectors are both assigned value $1$ (exclusivity), and that one of the projectors is assigned value $1$ in every complete basis set (completeness).

The KS theorem itself can be seen as a corollary of the remarkable Gleason's theorem \cite{Gleason57}. Simply put, this theorem states that when one assumes, in addition to the above non-contextuality, the mathematical structure of measurements in quantum theory, then one can recover the Born rule for calculating probabilities \cite{Piron76, Mackey63}. That is, in this case, \textit{any} assignment of outcome probabilities to projectors takes the form of applying the Born rule to some density operator. Gleason's work was not explicitly aimed at addressing the hidden variable problem, but was rather directed towards reducing the axiomatic basis of quantum theory. Nevertheless, the KS theorem that establishes the specific contradiction between the Born rule and non-contextual $\{0,1\}$-assignments can be seen as a consequence of Gleason's theorem along with a logical compactification argument \cite{Pitowsky98, Pitowsky06}. 

Pitowsky observed the above fact and suggested that it should be possible to find generalisations of the Kochen-Specker theorem to beyond $\{0,1\}$-colorings, noting that the "general constructive case is, to the best of my knowledge, open" \cite{Pitowsky98}. This paper is therefore propaedeutic to a larger project exploring such finite outcome alphabet generalisations of the Kochen-Specker theorem. Specifically here, we consider an interesting case going beyond $\{0,1\}$, namely non-contextual outcome assignments from $\mathcal{O} = \{0, p, 1-p, 1\}$ for $p \neq 1/d$ and $0 \leq p \leq 1/2$. Observe that the $d$-dimensional maximally mixed state naturally provides an outcome assignment of $1/d$ to all rank-one projective measurements. It is noteworthy that the known KS sets admit assignment from the set $\mathcal{O}$ (see Appendix A in \cite{Sup}). It is also noteworthy that as we show, the case $p = 1/2$ is of especially important physical significance, being the class of fundamentally binary outcome assignments, a natural general probabilistic analog of quantum theory studied in \cite{KVC17} (see Appendix B in \cite{Sup}). These are hidden variable models which postulate that at a fundamental level the measurements are binary (at most one of two measurement outcomes can occur for each measurement), in contrast to the single deterministic outcome assignment ruled out by KS. 

In this paper, we establish by an explicit construction, a generalisation of the Kochen-Specker theorem to rule out non-contextual outcome assignments from $\mathcal{O}$ to quantum projectors in three dimensions, i.e., $\mathcal{O} = \{0, p, 1-p, 1\}$ for $p \leq 1/2$, and $p \neq 1/3$. Formally, we construct a finite set of vectors (alternatively, their corresponding rank-one projectors) that do not admit any outcome assignment from the set $\mathcal{O}$ obeying the Kochen-Specker rules that no pair of orthogonal projections have sum of value assignments exceeding $1$, and that the sum of value assignments to the projectors in every complete measurement is equal to $1$. As stated above, the traditional KS theorem has found applications in several areas of quantum information. Here, we present an application of our generalised KS theorem to the construction of two-player pseudo-telepathy games that admit a winnning quantum strategy but cannot be won by players sharing classical resources even if these are augmented with the quintessential resource of no-signaling boxes of the type proposed by Popescu and Rohrlich in \cite{PR94} (that is, having marginal probabilities belonging to $\{0,1/2,1\}$).

The paper is organised as follows. We first introduce some preliminary notation and elementary concepts, in particular the representation of KS sets as graphs, and the notion of $\{0,1\}$-gadgets that were shown to be crucial in the construction of KS sets in \cite{RRHP+20, LRHRH23}. We then state the central theorem of the paper establishing a generalisation of the KS theorem to $\mathcal{O}$-valued outcome assignments, and outline a sketch of its proof, deferring the technical details to Appendix C in the Supplemental Material \cite{Sup}. We then present the construction of a class of two-player Pseudo-Telepathy games using the central theorem, detailing the uniqueness of this new class of games and highlighting their application in developing device-independent protocols for randomness generation against no-signaling adversaries. We conclude with a discussion on further possible generalisations of the Kochen-Specker theorem and possible new applications.  \\

\textit{Kochen-Specker and Gleason theorems.-} A projective measurement $M$ is described by a set $M = \{V_1, \ldots, V_d\}$ of projectors $V_i$ in a complex Hilbert space, that are orthogonal $V_i V_j = \delta_{i,j} V_i$ and sum to the identity $\sum_i V_i = \mathds{1}$. Each $V_i$ corresponds to a possible outcome $i$ of measurement $M$ and determines the probability of this outcome when measuring a state $| \psi \rangle$ through the Born rule $\text{Pr}_{\psi}(i | M) = \langle \psi | V_i | \psi \rangle$. When two physically distinct measurements $M = \{V_1, \ldots, V_d\}$ and $M' = \{V'_1, \ldots, V'_d\}$ share a common projector $V_i = V'_i = V$ it holds that $\text{Pr}_{\psi}(i | M) =\text{Pr}_{\psi}(i' | M') = \langle \psi | V | \psi \rangle$. That is, the outcome probabilities in quantum theory are are determined by the individual projectors alone, independently of the context to which they belong - the probability assignment is non-contextual. 

To prove the Kochen-Specker theorem \cite{KS67}, one usually considers a finite set of rank-one projectors in a complex Hilbert space of dimension $d \geq 3$. We represent the projectors by the vectors onto which they project and consider a set $\mathcal{S} = \big\{ | v_1 \rangle, \ldots, |v_n \rangle  \big\} \subset \mathbb{C}^d$ for $d \geq 3$. Consider any assignment $f: \mathcal{S} \rightarrow \{0,1\}$ that associates to each $|v_i\rangle$ a probability $f(|v_i \rangle) \in \{0,1\}$. To interpret the $f(|v_i \rangle)$ as outcome probabilities, they should satisfy the following two conditions termed as Kochen-Specker (KS) rules
\begin{enumerate}
\item Exclusivity: $\sum_{|v\rangle \in \mathcal{O}} f(|v\rangle) \leq 1$ for every set $\mathcal{O} \subseteq \mathcal{S}$ of mutually orthogonal vectors, 

\item Completeness: $\sum_{|v\rangle \in \mathcal{B}} f(|v\rangle) = 1$ for every set $\mathcal{B} \subseteq \mathcal{S}$ of $d$ mutually orthogonal vectors (a basis or complete measurement).

\end{enumerate} 
An assignment $f: \mathcal{S} \rightarrow \{0,1\}$ satisfying the KS rules is called a $\{0,1\}$-coloring. The KS theorem states that for $d \geq 3$, there exist finite sets of vectors (called KS sets) that do not admit a $\{0,1\}$-coloring, thus establishing the impossibility of a non-contextual deterministic outcome assignment. In their original proof, Kochen and Specker described a set $\mathcal{S}$ of $117$ vectors in $\mathbb{C}^d$ dimension $d = 3$ \cite{KS67}. Finding KS sets in arbitrary dimensions is a well-known hard problem towards which a huge amount of effort has been expended, in particular 'records' of minimal KS systems in different dimensions have been studied \cite{CG96, Arends09, AOW11, PMMM05}. The minimal KS set contains $18$ vectors in dimension $d = 4$ due to Cabello et al. \cite{Cabello08, CEG96}.   

The classical Gleason's theorem \cite{Gleason57, RB99, CKM85, JvN55} implies that in separable Hilbert spaces of dimension $d \geq 3$, a quantum state is completely determined by only knowing the answers to \textit{all} of the possible yes/no questions. Consider the real three-dimensional space $\mathbb{R}^3$, and suppose that $f$ is a non-negative function (technically termed a frame function) on the unit sphere with the property that $f(|u_1 \rangle) + f(|u_2 \rangle) + f(|u_3\rangle)$ is a fixed constant for \textit{every} set $\big\{|u_1 \rangle, |u_2 \rangle, |u_3 \rangle \big\}$ of mutually orthogonal vectors (a basis). Then, Gleason's theorem assures that $f$ is a quadratic form, i.e., there is a density operator $\rho$ such that $f(|u\rangle) = \langle u | \rho | u \rangle$ for every unit vector $|u \rangle \in \mathbb{R}^3$. The KS theorem then arises by a logical compactification argument from Gleason's theorem as a consequence of the fact that no density operator exists satisfying the above and an assignment $f(|u \rangle) = f(|v \rangle) = 1$ for two linearly independent vectors \cite{Pitowsky98} . 

Pitowsky provided a constructive method to prove a central lemma in the proof of Gleason's theorem, namely that every non-negative frame function on the set of vectors (strictly speaking, the rays) in $\mathbb{R}^3$ is continuous. He (along with Hrushovski) used this to prove a generalisation of the Kochen-Specker theorem in a different direction, that he termed the Logical Uncertainty Principle \cite{Pitowsky98, HP04} (for which we provided a simpler proof in \cite{RRHP+20}). Other attempts at constructive proofs of Gleason's theorem are also known \cite{RB99, CKM85, JvN55}. 

Closely related to the Logical Uncertainty Principle is the notion of a $01$-gadget \cite{RRHP+20}. This is a set of vectors $\mathcal{S}_{gad} \subset \mathbb{C}^d$ that is $\{0,1\}$-colorable, and contains two distinguished non-orthogonal vectors $|v_1 \rangle, |v_2 \rangle \in \mathcal{S}_{gad}$ for which $f(|v_1 \rangle) + f(|v_2 \rangle) \leq 1$ in every $\{0,1\}$-coloring $f$ of $\mathcal{S}_{gad}$. In other words, while a $01$-gadget $\mathcal{S}_{gad}$ admits a $\{0,1\}$-coloring, in any such coloring the two distinguished non-orthogonal vectors cannot both be assigned the value $1$ (as if they were actually orthogonal). The $01$-gadgets (also known as bugs \cite{Clifton93, Stairs83, Belinfante73} or true-implies-false sets \cite{CPSS18}) have been found to be instrumental in constructing proofs of the Kochen-Specker theorem, as well as more general sets exhibiting State-Independent Contextuality (SIC) \cite{YO12}. In this paper, we make use of $01$-gadgets to construct a different generalisation than Pitwosky's Logical Uncertainty Principle, namely to rule out outcome assignments from a different finite alphabet than $\{0,1\}$.

\textit{Generalisation of Kochen-Specker theorem.-}  
As noted above, Gleason's theorem implies that the only consistent (non-contextual) assignment rule, when one considers all the vectors in $\mathbb{R}^3$, is the Born rule. Pitowsky proved a finite version of Gleason's theorem called the Logical Uncertainty Principle which has been found to be of great use in contextuality-based randomness generation, where it is used in localizing the value indefiniteness implied by the KS theorem, i.e., in pinpointing the specific observables from whose outcomes randomness may be extracted \cite{ACS15}.

Gleason's theorem also suggests that it should be possible to find finite vector sets that do not admit non-contextual assignments (colorings) from other finite alphabets than $\{0,1\}$. As a first consideration, we might consider the alphabet $\mathcal{O} = \{0, p, 1-p, 1\}$ for $p \in [0,1/3) \cup (1/3, 1/2]$ (note that any rank-one projector set in $\mathbb{C}^d$ admits a coloring from the alphabet $\{0, 1/d, 1\}$ by virtue of the coloring resulting from measurements on the maximally mixed state $\frac{1}{d}\mathds{1}$). To be clear, let us formally define the notion of a $\mathcal{O}$-valued outcome assignment (coloring).

\begin{definition}
Let $\mathcal{O} := \{0, p, 1-p, 1\}$ with $p \in [0,1/3) \cup (1/3, 1/2]$. An $\mathcal{O}$-valued non-contextual outcome assignment of a set of vectors $\mathcal{S} = \{|v_1 \rangle, \ldots, |v_n \rangle\} \subset \mathbb{C}^d$ with $d \geq 3$ is an assignment $g: \mathcal{S} \rightarrow \mathcal{O}$ that satisfies the Kochen-Specker rules stated as:
\begin{enumerate}
\item $\sum_{|v\rangle \in \mathcal{B}'} g(|v\rangle) \leq 1$ for every set $\mathcal{B}' \subseteq \mathcal{S}$ of mutually orthogonal vectors, 

\item $\sum_{|v\rangle \in \mathcal{B}} g(|v\rangle) = 1$ for every set $\mathcal{B} \subseteq \mathcal{S}$ of $d$ mutually orthogonal vectors (a basis or complete measurement).

\end{enumerate} 
\end{definition}

From here on, for ease of exposition, we will specialise to the case $p = 1/2$. That is we will state the theorem for the case of $\{0,1/2,1\}$ colorings, the case of general  $\mathcal{O}$-colorings is explained in the Supp. Mat. \cite{Sup}. Every finite set of vectors that is not $\{0,1/2,1\}$-colorable provides a proof of the KS theorem, while not every proof set of the KS theorem is non-$\{0,1/2,1\}$-colorable. In fact, it turns out that all of the known KS vector sets (to the best of our knowledge) are $\{0,1/2,1\}$-colorable (see Appendix A of Supp. Mat \cite{Sup}). 

Therefore, a finite vector set that is not $\{0,1/2,1\}$-colorable proves a stronger statement than the original KS theorem, resulting in a \textit{stronger} notion of contextuality than hitherto explored, with concomitant novel applications as we shall see later. In this section, we state and prove by means of an explicit construction, the following theorem. As will be seen from the proof, some of the steps apply to more general finite alphabet outcome assignments than considered in this paper. 
\begin{thm}
\label{thm:main}
There exist finite vector sets in $\mathbb{C}^3$ that do not admit $\mathcal{O}$-valued outcome assignments. 
\end{thm}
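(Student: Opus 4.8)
The plan is to build the non-colorable set around a standard Kochen--Specker set $\mathcal{K} \subset \mathbb{C}^3$ (taken with its full list of orthogonal bases/contexts) and to augment it with gadgets that neutralise the extra freedom introduced by the value $1/2$. The first observation is that completeness in $\mathbb{C}^3$ forces every basis $\{a,b,c\} \subseteq \mathcal{S}$ to be coloured either $(1,0,0)$ or $(1/2,1/2,0)$ up to permutation, since these are the only triples from $\{0,1/2,1\}$ summing to $1$. Two implication rules therefore survive verbatim from the $\{0,1\}$ setting: a vector assigned $1$ forces every orthogonal vector to $0$, and two vectors of a basis assigned $0$ force the third to $1$; exclusivity additionally forbids a $1$ and a $1/2$ on an orthogonal pair, and forbids an orthogonal triple of $1/2$'s. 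I would first record these as the basic propagation engine.

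\textbf{Eliminating the all-fractional colourings.} The key structural reduction I would prove is that a coloring of $\mathcal{K}$ using only the values $\{0,1/2\}$ cannot exist. Indeed, on such a coloring $g$ every basis is of type $(1/2,1/2,0)$, so no orthogonal pair is assigned $(0,0)$; hence $f := 1-2g$ takes values in $\{0,1\}$, assigns $1$ to exactly one vector of every basis, and never assigns $1$ to both members of an orthogonal pair. Thus $f$ is a genuine $\{0,1\}$-coloring of $\mathcal{K}$, contradicting the defining property of a KS set. Consequently \emph{any} $\{0,1/2,1\}$-coloring of (a superset of) $\mathcal{K}$ must assign the value $1$ to at least one vector.

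\textbf{Gadgets and assembly.} It remains to rule out colorings that do use the value $1$. Here I would exploit that the defining ``true-implies-false'' property of a $01$-gadget is derived solely from the two surviving propagation rules, and therefore continues to hold for $\{0,1/2,1\}$-colorings whenever the triggering vector genuinely carries the value $1$. Using $01$-gadgets I would build a composite gadget that, anchored in the orthogonal complement of a distinguished vector $u$, forces a $1$ to appear there; by exclusivity a vector $w \perp u$ with $g(w)=1$ pins $g(u)=0$ and in particular rules out $g(u)=1/2$. Attaching such gadgets to the vectors of $\mathcal{K}$ so that the appearance of a single $1$ propagates, through the robust rules, to reconstruct a full $\{0,1\}$-assignment on $\mathcal{K}$, I obtain the contradiction: the no-$1$ case is excluded by the reduction above, while the has-$1$ case is excluded by the gadget-amplified propagation. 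Finally I would verify that the assembled incidence structure is realisable by concrete unit vectors in $\mathbb{C}^3$ and that each gadget remains $\{0,1/2,1\}$-colorable in isolation, so that it constrains only its distinguished vector rather than contradicting trivially.

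\textbf{Main obstacle.} I expect the crux to be precisely this amplification step together with its geometric realisation. A lone $1$ at an \emph{a priori} unknown location only forces its own orthogonal plane to $0$ and, under the two robust rules alone, propagates no further; converting it into a global contradiction demands gadgets that manufacture a forced $1$ in a controlled position regardless of where the seed lies, and the resulting web of orthogonality and non-orthogonality constraints must be simultaneously satisfiable by rays in three dimensions --- the familiar hard core of all KS-type constructions, now made stricter by the need to also defeat every placement of the fractional value $1/2$. The extension to general $p \in [0,1/d)\cup(1/d,1/2]$ would require re-deriving the surviving implications for the alphabet $\{0,p,1-p,1\}$, where the fractional basis type becomes the asymmetric $(p,1-p,0)$, and rebuilding the corresponding gadgets.
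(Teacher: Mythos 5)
Your proposal contains one genuinely nice ingredient, but it does not close the theorem: the case of colorings that actually use the value $1$ is left as an acknowledged ``main obstacle,'' and that case is precisely where the paper's real work lies. Before that, a word on what you do have: your reduction for the all-fractional case (no $1$ anywhere) is correct modulo one fixable caveat, and it is genuinely different from the paper's route. The paper handles the no-$1$ case by two explicit new gadgets ($\mathcal{S}_2$, forcing three linearly independent vectors to value $0$, and $\mathcal{S}_3$, showing this is contradictory), whereas you observe that a $\{0,1/2\}$-only coloring of a Kochen--Specker set $\mathcal{K}$ would turn, via $f := 1-2g$, into a bona fide $\{0,1\}$-coloring of $\mathcal{K}$, which cannot exist --- and the same trick works for general $p$, since every basis must then be of type $(p,1-p,0)$ and you may set $f(v)=1$ iff $g(v)=0$. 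The caveat: for exclusivity of $f$ you need that no orthogonal \emph{pair} gets $g$-value $(0,0)$, which only follows if every orthogonal pair of $\mathcal{K}$ lies inside a complete basis of $\mathcal{K}$; so you must first basis-complete $\mathcal{K}$ (harmless, since enlarging a set preserves non-colorability). With that repair, this half of your argument is sound and arguably more economical than the paper's Steps 2--3.

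The gap is the complementary case. You propose that once a $1$ appears somewhere, $01$-gadgets (true-implies-false) plus the two surviving propagation rules can be assembled to reconstruct a full $\{0,1\}$-assignment on $\mathcal{K}$ and hence a contradiction --- but, as you yourself note, a lone $1$ at an unknown location only zeroes its orthogonal plane and then stalls: any vector not orthogonal to it may carry $1/2$, and $1/2$-valued vectors block every propagation rule you have listed. No amount of true-implies-false gadgetry fixes this, because that property only fires when \emph{both} distinguished vectors would carry $1$; it says nothing against the mixed assignments in which a single $1$ coexists with a sea of $1/2$'s. The paper's missing idea, which your sketch does not contain, is a stronger primitive: the Logical-Indeterminacy-Principle gadget of \cite{RRHP+20, Pitowsky98, HP04}, whose property ($f(v_1)=1 \Rightarrow 0 < f(v_2) < 1$) holds for \emph{arbitrary} $[0,1]$-valued assignments and hence survives the fractional values. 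The paper attaches three such gadgets to a six-vector configuration so that $f(v_1)=1$ forces $f(v_3), f(v_5), f(v_6) \in (0,1)$ while the KS rules force $f(v_3) = f(v_5)+f(v_6)$; since values in $(0,1)\cap\mathcal{O}$ lie in $\{p,1-p\}$, this pins $f(v_3)=1-p$, $f(v_5)=f(v_6)=p$, i.e.\ $p=1/3$, which is excluded. This per-vector exclusion of the value $1$ (Lemma \ref{lem:one-value}), applied to every vector of the final set, is what eliminates your has-$1$ case; without it, or some equally concrete replacement for your ``amplification'' step, the proposal does not prove the theorem.
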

\textit{Sketch of Proof}.-
The proof has three steps, each involving a gadget (a finite set of vectors) construction that proves an interesting lemma in its own right. In the first step, we present a gadget $\mathcal{S}_1$ that proves a lemma ruling out 'one-value' for any fixed projector in any $\mathcal{O}$-valued outcome assignment.
\begin{lemma}
\label{lem:one-value}
Let $|v_1 \rangle$ be any fixed vector in $\mathbb{C}^d$ with $d \geq 3$. There is a finite set of vectors $\mathcal{S}_1^{v} \subset \mathbb{C}^d$ with $|v \rangle \in \mathcal{S}^{v}$ such that there does not exist any outcome assignment $f: \mathcal{S}_1^{v} \rightarrow \mathcal{O}$ with $f(|v \rangle) = 1$.
\end{lemma}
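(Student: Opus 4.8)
\emph{Proof plan.} It suffices to treat $\mathbb{C}^3$, the case needed for Theorem~\ref{thm:main} (larger $d$ is analogous). The starting point is the robust observation that the hypothesis $f(|v\rangle)=1$ already forces every vector orthogonal to $|v\rangle$ to be assigned $0$: by the exclusivity rule $f(|v\rangle)+f(|w\rangle)\le 1$, and since all values in $\mathcal{O}$ are non-negative, any $|w\rangle\perp|v\rangle$ must have $f(|w\rangle)=0$. This deduction is independent of the alphabet $\mathcal{O}$, so the entire two-dimensional plane $v^{\perp}$ is pinned to $0$. My target is to build a finite set $\mathcal{S}_1^{v}$ in which this single value-$1$ assumption propagates, through the KS rules alone, to an orthonormal basis all three of whose vectors are forced to $0$; completeness then demands their values sum to $1$, a contradiction that rules out $f(|v\rangle)=1$.

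To carry out the propagation I would use only two local rules, both valid for every value in $\mathcal{O}$: (i) a vector assigned $1$ forces its orthogonal complement to $0$, as above; and (ii) if two vectors of an orthonormal basis are already assigned $0$, completeness forces the third to $1$. Chaining these is exactly the true-implies-false mechanism of the $01$-gadgets of \cite{RRHP+20}, which in the ordinary $\{0,1\}$ setting connect $|v\rangle$ to chosen non-orthogonal vectors and drive them to $0$. The construction of $\mathcal{S}_1^{v}$ is therefore to take such a gadget network anchored at $|v\rangle$ and arrange its orthogonality graph so that the forced zeros accumulate onto a single basis.

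The main obstacle is that the enlarged alphabet $\mathcal{O}=\{0,p,1-p,1\}$ breaks the naive $01$-gadget deductions: a basis with one vector at $0$ leaves the other two summing to $1$, and this can now be realised by the fractional split $p,\,1-p$ rather than by $1,0$, so the classical ``this vector must be $1$'' step can leak. The remedy is rule (ii): wherever the $\{0,1\}$ argument concludes a value $1$, I reinforce the gadget with extra vectors supplying \emph{two} orthogonal $0$-companions for that vector, which is cheap in $\mathbb{C}^3$ since the whole plane $v^{\perp}$ is already $0$ and furnishes such companions. With every ``$=1$'' step so backed, the propagation becomes valid for all assignments into $\mathcal{O}$, not merely $\{0,1\}$. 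Finally, because $p\neq 1/3$ the uniform value $1/3$ is excluded from $\mathcal{O}$, so the terminal all-zero basis (sum $0\neq 1$) is a genuine contradiction for every admissible $p$; I would close by checking that the assembled $\mathcal{S}_1^{v}$ is finite and, away from the forbidden value, $\mathcal{O}$-colorable, confirming it is a bona fide gadget.
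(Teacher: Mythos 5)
Your opening deductions are correct, but the construction collapses at the step where you ``reinforce'' each would-be value-$1$ vertex with two orthogonal $0$-companions drawn from $v^{\perp}$. That is geometrically impossible: if $a,b \in v^{\perp}$ are mutually orthogonal and both orthogonal to $w$, then $w \perp \mathrm{span}(a,b) = v^{\perp}$, i.e., $w$ is parallel to $v$. For every other vector $w$, the plane $w^{\perp}$ meets $v^{\perp}$ in a single ray, so $v^{\perp}$ furnishes at most \emph{one} companion, never two. Worse, the whole propagation scheme built from your rules (i) and (ii) is stuck at the initial configuration: starting from $f(|v\rangle)=1$ and $f \equiv 0$ on $v^{\perp}$, a new $0$ can only be forced by exclusivity against a value-$1$ vector ($v$ is the only one, and its complement is already zeroed), while a new $1$ can only be forced by two orthogonal zeros, which, as just shown, only $v$ itself possesses. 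The fractional values $p$ and $1-p$ absorb every remaining completeness constraint --- precisely the ``leak'' you yourself identified --- and your remedy does not close it. A further warning sign: your terminal contradiction (an all-zero basis) nowhere uses $p \neq 1/3$, yet that hypothesis must enter essentially; indeed the constraints derivable in this setting admit the consistent solution $f(v_3)=2/3$, $f(v_5)=f(v_6)=1/3$ when $p=1/3$.

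The paper's proof uses an ingredient your plan lacks: the Logical Indeterminacy Principle gadget of \cite{RRHP+20, Pitowsky98, HP04} (Theorem \ref{thm:LIP} of the appendix), a finite set $S^{v_1,v_2}_{\text{LIP}}$ containing two non-orthogonal vectors $|v_1\rangle, |v_2\rangle$ such that in \emph{every} $[0,1]$-assignment, $f(|v_1\rangle)=1$ implies $0 < f(|v_2\rangle) < 1$. This lets one force values into the open interval $(0,1)$ --- something no $\{0,1\}$-style propagation can do. Anchoring three such gadgets at $|v_1\rangle$ toward $|v_3\rangle, |v_5\rangle, |v_6\rangle$, and using $|v_2\rangle \perp |v_1\rangle$ together with the two overlapping bases $(v_2,v_3,v_4)$ and $(v_4,v_5,v_6)$, completeness yields $f(|v_3\rangle) = f(|v_5\rangle)+f(|v_6\rangle)$ with all three values strictly between $0$ and $1$, hence in $\{p,1-p\}$; the strict inequalities $f(|v_3\rangle) > f(|v_5\rangle)$ and $f(|v_3\rangle) > f(|v_6\rangle)$ then force $f(|v_3\rangle)=1-p$ and $f(|v_5\rangle)=f(|v_6\rangle)=p$, i.e., $1-p=2p$, so $p = 1/3$, which is excluded (and $p=1/2$ is ruled out directly since then no strict inequality is possible). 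The contradiction is arithmetic, not an all-zero basis; some device of this kind --- pinning values inside $(0,1)$ and then exploiting the finiteness of the alphabet --- is what your scheme is missing and cannot be supplied by rules (i) and (ii) alone.
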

The first step reduces the problem to the case of outcome assignments in $\mathcal{O} \setminus \{1\}$ since a finite vector set that does not admit outcome assignments from $\mathcal{O} \setminus \{1\}$ can be augmented with finite gadgets proving Lemma \ref{lem:one-value} for each of the vectors in the set. In the second step, we present a gadget $\mathcal{S}_2$  that admits outcome assignments from $\mathcal{O} \setminus \{1\}$ but such that in any such assignment it is necessarily the case that $d$ linearly independent vectors are all assigned value $0$. In the third and final step, we present a gadget that admits outcome assignments from $\mathcal{O} \setminus \{1\}$ but such that in any such assignment the $d$ linearly independent vectors of the form in step $2$ cannot all be assigned value $0$. Taking the union of the finite vector sets from steps $2$ and $3$, and augmenting them with a gadget from Lemma \ref{lem:one-value} for each of the vertices in $\mathcal{S}_2 \cup \mathcal{S}_3$ proves the statement. $\blacksquare$.

The above Theorem \ref{thm:main} restricted to the specific case of $\{0,1/2, 1\}$-assignments carries an additional physical significance arising from considerations of generalised probabilistic theories. Specifically, these correspond to the class of Fundamentally Binary theories - these are consistent (no-disturbance \cite{RSKK12}) theories in which measurements yielding many outcomes are constructed by selecting from binary measurements \cite{KVC17, KC16}.  In other words, these theories posit that on a fundamental level only measurements with two outcomes exist, and scenarios where a measurement has more than two outcomes are achieved by classical post-processing of one or more two-outcome measurements. 

In \cite{KVC17}, the authors showed that in considering Bell non-locality, the set of fundamentally binary non-signalling correlations does not encompass the set of quantum non-local correlations and hence an explanation in terms of measurements being binary is not feasible for quantum non-locality. Theories with binary measurements have an interesting property that was proven by us in \cite{LRHRH23}. Specifically in \cite{LRHRH23}, we characterized the extreme points of the set of correlations in fundamentally binary theories showing that for every measurement $x$ and every outcome $a$, it holds that $P_{A|X}(a|x) \in \{0,1/2,1\}$ for every extremal behaviour $\{P^{\text{ext}}_{A|X}(a|x)\}$ of the binary consistent correlation set. Seen in this light, a non-$\{0,1/2,1\}$-colorable vector set may also be stated as proving a Kochen-Specker-type theorem of the following form: for every quantum system belonging to a Hilbert space of dimension greater than two, irrespective of its actual state, a finite set of measurements exists that does not admit a consistent fundamentally binary outcome assignment (see Appendix B of Supp. Mat. \cite{Sup} for details). It is an interesting open question whether possible further generalisations to outcome assignments from different finite alphabet sets has a similar correspondence with other generalised probabilistic theories \cite{KC16}. 

Theorem \ref{thm:main} pertaining to the case of $\{0,1/2,1\}$ outcome assignments is also interesting from a graph-theoretic viewpoint, where it bears a relation with the stable set problem \cite{LSS87, CSW14}. A stable set (or independent set) $I$ in a graph is a subset of the vertex set such that no two vertices in $I$ are adjacent (connected by an edge). Determining the size of the maximum stable set (the independence number $\alpha(G)$) of a graph is a well-known hard problem, and can be seen as an optimization over the stable set polytope denoted by $STAB(G)$ (the convex hull of the incidence vectors of the stable sets of the graph).  Bounds on $\alpha(G)$ are obtained via a linear programming relaxation known as the fractional stable set polytope $FSTAB(G)$. In this light, Thm. \ref{thm:main} postulates the existence of graphs for which the intersection of $FSTAB(G)$ with the hyperplanes defined by the Completeness (Normalization) conditions for each maximum clique is empty (see Appendix B of \cite{Sup}).    

%In the following section, we proceed to consider one novel application of Theorem \ref{thm:main} to constructing Pseudo-telepathy games that cannot be won by classical players even when supplemented with additional no-signaling boxes of specific PR-type (having marginals taking values in $\{0,1/2,1\}$).  

\textit{Novel PT games unwinnable using standard no-signalling resources.-} 
Pseudo-telepathy (PT) games \cite{BCT99} are distributed tasks that can be perfectly achieved with shared quantum - but not classical - information. Specifically, two distant parties who do not communicate but are allowed to share a certain entangled quantum state can satisfy a deterministic condition on their mutual input-output behavior with certainty, where parties without shared entanglement cannot do so. PT games are interesting from a foundational point of view as a simple 'all-versus-nothing' proof of quantum non-locality. In recent years, these games have also found applications as necessary resources in protocols of device-independent randomness amplification \cite{CR12, our16}, in demonstrating quantum computational advantage for shallow circuits \cite{BGK18} and in the proof of MIP* = RE \cite{JNVWY21}. 

An interesting connection between KS vector sets and PT games has been found \cite{RW04}. Specifically, every so-called 'weak' KS set leads to a PT game and every PT game in which the optimal quantum strategy uses a maximally entangled state leads to a KS set. Interestingly, the fact that known KS sets admit coloring in $\{0,1/2,1\}$ has meant that all known PT games are winnable when the classical players are given a specific type of additional no-signaling resource - viz. a non-local box with marginals in $\{0,1/2,1\}$. That is, all known PT games such as the Magic Square game used in the aforementioned applications are winnable when players share a single copy of such non-local boxes, the paradigmatic example of which is the PR-box \cite{PR94}. 

Here we present an analogous statement (proof in Appendix D of \cite{Sup}) showing that our generalised KS sets also lead to PT games, but now with the novel property that these games cannot be won even when the players share PR-type non-local boxes, where by PR-type behaviour we mean no-signalling boxes with marginals in $\{0, 1/2, 1\}$. 
\begin{proposition}
\label{prop:PT-games}
There exist bipartite Pseudo-telepathy games that cannot be won using classical resources even when supplemented with a single copy of PR-type behaviour in every game round.
\end{proposition}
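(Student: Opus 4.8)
The plan is to leverage the existing connection between Kochen-Specker sets and pseudo-telepathy games established in \cite{RW04}, but adapted to the stronger non-colorability we have obtained. First I would recall the standard construction: given a KS vector set $\mathcal{S} \subset \mathbb{C}^3$ that is not $\{0,1/2,1\}$-colorable, one builds a two-player game in which Alice receives as input a complete basis (a maximal clique of mutually orthogonal vectors from $\mathcal{S}$) and must output one of the vectors in it, while Bob receives as input a single vector $|v\rangle \in \mathcal{S}$ and must output a bit. The winning conditions are the usual consistency constraints: whenever Bob's input vector lies in Alice's input basis, their outputs must be correlated (Bob's bit must match whether Alice selected that very vector), and Alice must always return a vector from the basis she was handed. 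A shared maximally entangled state $\frac{1}{\sqrt{3}}\sum_i |ii\rangle$ furnishes a perfect quantum strategy, since measuring in the basis dictated by the input produces perfectly correlated outcomes across the two parties, establishing the pseudo-telepathy (quantum-winnable) half of the statement.

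The crux is the classical-impossibility half, strengthened to allow a single PR-type no-signalling box. The key observation is that any deterministic classical strategy for Alice amounts to a function selecting one vector per basis, which --- combined with Bob's deterministic responses --- would yield precisely a $\{0,1\}$-coloring of $\mathcal{S}$, contradicting even ordinary KS non-colorability. The plan is to show that augmenting the shared classical randomness with one copy of a PR-type box (a no-signalling box whose marginals lie in $\{0,1/2,1\}$) cannot rescue the players. I would argue that when the players consume a single such box, the effective local output of each party, conditioned on their input and on the box's input/output on their side, takes marginal values only in $\{0,1/2,1\}$; averaging over the box's behaviour and any shared randomness, the induced value assignment $g: \mathcal{S} \to [0,1]$ that records the probability Bob outputs $1$ on input $|v\rangle$ (equivalently, the probability the vector is selected) is forced into the alphabet $\mathcal{O} = \{0,1/2,1\}$ at each point. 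Here I would invoke the characterisation of extreme points of fundamentally binary correlations from \cite{LRHRH23}, which guarantees the $\{0,1/2,1\}$ marginal structure of PR-type resources.

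I would then verify that the perfect-winning constraints of the game translate exactly into the Kochen-Specker rules of the Definition: the basis-consistency and output-correlation conditions force Completeness (the selected-probabilities over any input basis sum to $1$) and Exclusivity (no orthogonal pair is jointly selected with total weight exceeding $1$). Consequently, a hypothetical perfect classical-plus-PR-box strategy would yield an $\mathcal{O}$-valued non-contextual outcome assignment of $\mathcal{S}$, which Theorem \ref{thm:main} rules out. This contradiction establishes that no such strategy wins the game perfectly. The main obstacle I anticipate is the careful accounting in the second step: a single shared PR-type box introduces correlations between Alice and Bob that are not simply a convex mixture of local deterministic strategies, so I must show rigorously that, after conditioning on the box's local input/output on each side and marginalising appropriately, the resulting effective assignment still lands in $\mathcal{O}$ pointwise and still respects both KS rules. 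Controlling the interplay between the no-signalling box's two-sided correlations and the per-vertex marginal constraint --- ensuring that the box cannot be used to produce a forbidden value such as $1/3$ or to violate Exclusivity across a clique --- is the technical heart of the argument.
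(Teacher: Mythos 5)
Your proposal follows essentially the same route as the paper's proof: build a two-player game from the non-$\{0,1/2,1\}$-colorable set of Theorem \ref{thm:main}, exhibit a perfect quantum strategy on the maximally entangled state $\frac{1}{\sqrt{3}}\left(|00\rangle + |11\rangle + |22\rangle\right)$, and show that a perfect classical-plus-PR-type strategy would induce a $\{0,1/2,1\}$-coloring of the vector set, contradicting the theorem; like the paper, you invoke the extreme-point characterisation of fundamentally binary correlations from \cite{LRHRH23}. (The paper's game hands complete bases to both players rather than a basis to Alice and a single vector to Bob, but that difference is cosmetic.)

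However, one step in your argument is stated in a way that would fail. You claim that ``averaging over the box's behaviour and any shared randomness'' forces the induced assignment $g$ into $\{0,1/2,1\}$ pointwise. Averaging does the opposite: a convex mixture of strategies whose marginals lie in $\{0,1/2,1\}$ generically has marginals anywhere in $[0,1]$ (mixing a branch where $g(|v\rangle)=0$ with a branch where $g(|v\rangle)=1/2$ yields $1/4$), so the alphabet is not preserved under the mixture and no contradiction with Theorem \ref{thm:main} would follow. The repair --- which is exactly how the paper proceeds --- is to never average: the set of classical strategies supplemented with a single PR-type box is a polytope whose extreme points have marginals in $\{0,1/2,1\}$; since the winning probability is affine on this polytope and the hypothetical strategy wins with probability exactly $1$, every extreme point appearing in its convex decomposition must also win perfectly, so one may assume without loss of generality that the perfect strategy is itself extremal. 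Only for such an extremal strategy do the marginals lie in $\{0,1/2,1\}$; the perfect-winning consistency conditions (equal inputs force equal outputs, orthogonal outputs never occur jointly) then make the assignment well-defined across contexts and enforce the Kochen-Specker rules, completing the contradiction. With this fix, your argument coincides with the paper's.
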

The above Proposition \ref{prop:PT-games} readily leads to an application in device-independent cryptographic protocols for quantum random number generation or key distribution where the PT games give rise to protocols that are secure even against a class of no-signaling adversaries who are allowed to prepare the devices of honest players to obey the input-output behavior of these non-local boxes. Significant no-go theorems exist for proving security against general no-signalling adversaries \cite{AFTS12}, it is of interest to see whether such no-go statements can be overcome by restricting the power of no-signaling adversaries to an intermediate level above quantum. Other device-independent consequences of this novel class of Bell inequalities will be explored in forthcoming work.

\textit{Conclusions.-}
In this paper, we have studied a generalisation of the fundamental KS theorem to ruling out (consistent) hidden variable theories with outcome assignments from other finite alphabets than $\{0,1\}$. While we have presented one application to constructing bipartite Bell inequalities with novel device-independent applications, it would be of great interest to explore other applications to randomness generation, measurement-based quantum computation, and communication complexity scenarios where traditional KS contextuality has been shown to play a prominent role. The usual KS sets have thus far been seen to be necessary in constructing PT games and in identifying channels for which quantum entanglement enhances zero-error capacity \cite{CLMW10}. There is an interesting possibility that generalised KS sets could give rise to PT games for which the optimal strategy is achieved by a non-maximally entangled state, a result that would be useful in several areas. 

Simple and efficient criteria for identifying Kochen Specker orthogonality graphs have been found \cite{RH14}. It seems feasible to extend these to criteria for identifying orthogonality graphs of more general KS-type proofs, a task which we defer for future work. In traditional KS contextuality, records have been achieved of the minimal number of projectors needed to demonstrate state-independent and state-dependent \cite{KCBS08} contextuality in different dimensions. Such small sized proofs have enabled experimental tests of KS contextuality \cite{HLZPG03, LLS+11, W+22}. While our proof is constructive and involves a finite set of vectors, it would be important to reduce the number of projectors to make the generalised KS theorem amenable to experimental test. It would be thus very interesting to find the minimal sized KS sets ruling out $\mathcal{O}$-valued outcome assignments for different finite output alphabets $\mathcal{O}$. Similarly, various measures to quantify traditional KS contextuality have been proposed \cite{HZSS+23, GHH+14}, it would be of interest to quantify the stronger notion of contextuality captured by the generalised KS theorem.   

%{\color{blue} 1. We initiate a generalisation into non-A-colorable sets for general finite alphabet sets A, 2. Such non-colorable sets lead to PT games with interesting properties, 3. From the point of view of FSTAB, QSTAB etc. it is quite surprising that such a set even exists.

%And 4. it could lead to potential new applications of strengthened form of contextuality, identifying the extent to which quantum differs from classical and general consistent theories. No it is not just a set that does not admit a Joint Probability Distribution. It also does not admit some types of consistent assignments. $\{0,1\}$ coloring is equivalent to the existence of a Joint Probability Distribution - this is called Fine's theorem. Non-$\{0, 1/2, 1\}$-colorable vectors are just a special subclass of KS vectors...why should this be considered important. Finally there is the interesting possibility that the new constructions of PT games can lead to PT games for which the optimal strategy is achieved by a non-maximally entangled state, a result that could have implications in many directions. 

%}

\textit{Acknowledgments.-} Useful discussions with Yuan Liu and Pawe{\l} Horodecki are acknowledged. We acknowledge support from the Early Career Scheme (ECS) Grant No. 27210620, the General Research Fund (GRF) Grant No. 17211122 and the Research Impact Fund (RIF) Grant No. R7035-21.

\bibliographystyle{apsrev4-2}
\bibliography{common}

%\appendix

%\newpage

\mbox{}

%\newpage

\appendix

\onecolumngrid
%\appendix
\section{Background on Kochen-Specker Theorem}
Much of the work on the KS theorem is done in terms of finite simple graphs, specifically by representing the orthogonality relations in a vector set $\mathcal{S}$ by a graph $G_{\mathcal{S}}$ known as the orthogonality graph of $\mathcal{S}$ \cite{LSS87}. In such a graph, each vector $|v_i \rangle \in \mathcal{S}$ is represented by a vertex $v_i$ of $G_{\mathcal{S}}$ and two vertices $v_1, v_2$ in the vertex set $V(G_{\mathcal{S}})$ are connected by an edge if the associated vectors $|v_1 \rangle, |v_2 \rangle$ are orthogonal, i.e., $v_1 \sim v_2$ if $\langle v_1 | v_2 \rangle = 0$.

A clique in the graph $G_{\mathcal{S}}$ is a subset of vertices $Q$ such that every pair of vertices in $Q$ is connected by an edge, i.e., $v_1 \sim v_2$ for all $v_1, v_2 \in Q$. It follows that in the orthogonality graph $G_{\mathcal{S}}$ represents a set of mutually orthogonal vectors in $\mathcal{S}$. If $\mathcal{S} \subset \mathbb{C}^d$ contains a basis set of $d$ mutually orthogonal vectors, then the maximum clique size of $G_{\mathcal{S}}$ is $d$, this is denoted as $\omega(G_{\mathcal{S}}) = d$ where $\omega(G)$ denotes the clique number of the graph.  

The notion of $\{0,1\}$-coloring for the vector set $\mathcal{S}$ can be translated into the problem of coloring the vertices of its orthogonality graph $G_{\mathcal{S}}$ such that vertices connected by an edge cannot both be assigned the color $1$ and every maximum clique has exactly one vertex of color $1$. We say that a graph $G$ is $\{0,1\}$-colorable if there exists an assignment $f: V(G) \rightarrow \{0,1\}$ such that
\begin{enumerate}
\item $\sum_{v \in Q} f(v) \leq 1$ for every clique $Q \subset V(G)$, 

\item $\sum_{v \in Q_{\max}} f(v) = 1$ for every maximum clique $Q_{\max} \subset V(G)$.
\end{enumerate} 
The KS theorem is then equivalent to the statement that there exist for any $d \geq 3$, finite vector sets $\mathcal{S} \subset \mathbb{C}^d$ such that their orthogonality graph $G_{\mathcal{S}}$ does not admit a $\{0,1\}$-coloring.

In a strictly analogous manner, one can consider the notion of outcome assignments from other finite output alphabets $\mathcal{O}$. Namely, we will say that a graph $G$ is $\mathcal{O}$-colorable if there exists an assignment $g: V(G) \rightarrow \mathcal{O}$ such that
\begin{enumerate}
\item $\sum_{v \in Q} g(v) \leq 1$ for every clique $Q \subset V(G)$, 

\item $\sum_{v \in Q_{\max}} g(v) = 1$ for every maximum clique $Q_{\max} \subset V(G)$.
\end{enumerate} 
Our strengthened KS theorem from the main text would then be equivalent to the statement that there exists for any $d \geq 3$ a finite vector set $\mathcal{S} \subset \mathbb{C}^d$ such that its orthogonality graph $G_{\mathcal{S}}$ does not admit a $\mathcal{O}$-coloring. 

In their original proof, Kochen and Specker described a set $\mathcal{S}$ of $117$ vectors in $\mathbb{C}^d$ dimension $d = 3$ \cite{KS67}. Finding KS sets in arbitrary dimensions is a well-known hard problem towards which a huge amount of effort has been expended, in particular 'records' of minimal KS systems in different dimensions have been studied \cite{CG96, Arends09, AOW11, PMMM05}. The minimal KS set contains $18$ vectors in dimension $d = 4$ due to Cabello et al. \cite{Cabello08, CEG96}.   

Intriguingly, it turns out that all of the known KS sets (to the best of our knowledge) admit a $\{0,1/2,1\}$-coloring. Some of the well-known KS sets are shown in Figs. \ref{fig:KS117}, \ref{fig:Peres} and \ref{fig:Cabello} together with a possible $\{0,1/2,1\}$-coloring. Even if perchance it were to turn out that a known KS set existed that did not admit any $\{0,1/2,1\}$-coloring, it would still be of great interest to develop novel techniques to find KS sets that do not admit outcome assignments from other finite alphabets. 
\begin{figure}\label{fig:coloring-KS}
\centering
\subfigure[A $\{0,1/2,1\}$ outcome assignment for the original $117$ vector KS set]{\label{fig:KS117}\includegraphics[width=.3\columnwidth]{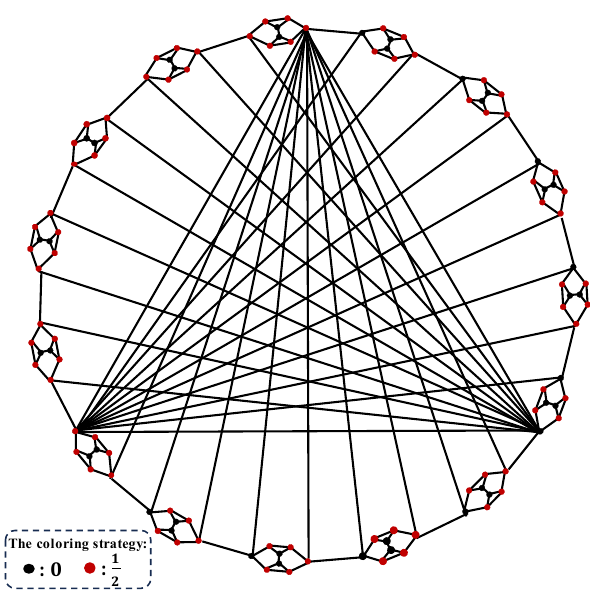}}\hspace{2em}
\subfigure[A $\{0,1/2,1\}$ outcome assignment for Peres' $33$ vector KS set]{\label{fig:Peres}\includegraphics[width=.3\columnwidth]{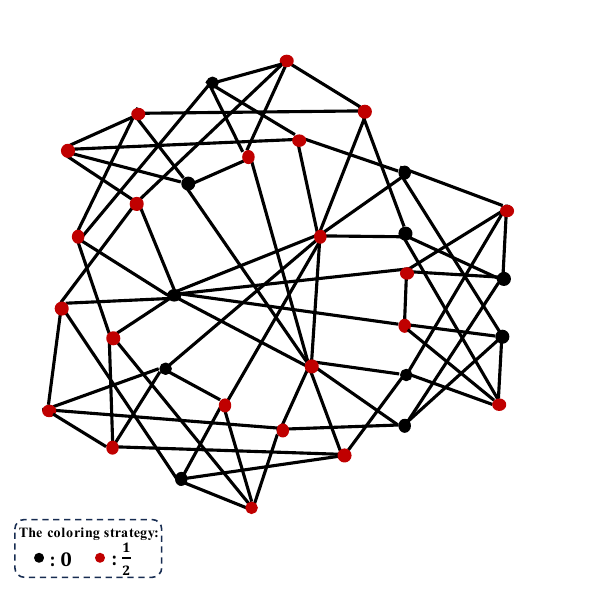}}\hspace{0.1em}\subfigure[A $\{0,1/2,1\}$ outcome assignment for Cabello's $18$ vector KS set ]
{\label{fig:Cabello}\includegraphics[width=.3\columnwidth]{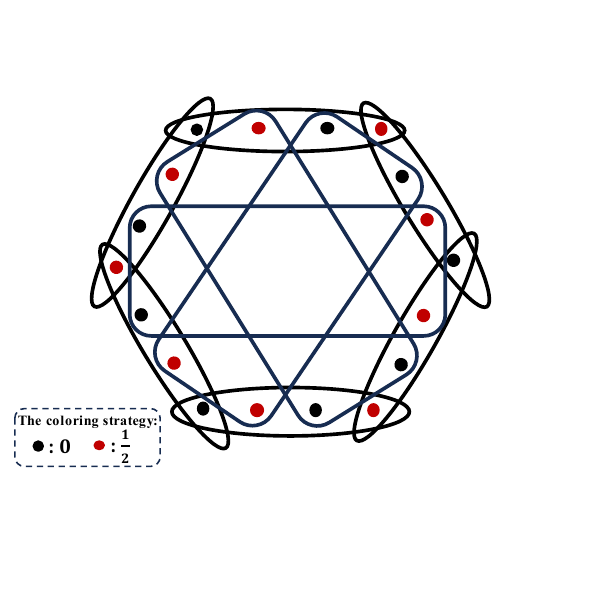}}\hspace{4em}
\caption{Some well-known Kochen-Specker sets are shown along with a possible outcome assignment from the set $\{0, 1/2, 1\}$ obeying the KS rules of exclusivity and completeness. (a) Kochen and Specker's original $117$ vector set \cite{KS67} in dimension three is shown. A valid $\{0, 1/2, 1\}$-coloring is indicated with red vertices being assigned value $1/2$ and black vertices being assigned value $0$. It can be readily checked that the sum of value assignments in every maximum clique (triangle) is $1$. (b) Peres' $33$ vector KS set in dimension three is shown. A valid $\{0, 1/2, 1\}$-coloring is indicated with red vertices being assigned value $1/2$ and black vertices being assigned value $0$. (c) Cabello's $18$ vector KS set \cite{Cabello08, CEG96} is shown as a hypergraph with hyperedges denoting maximum cliques of size four. A valid $\{0, 1/2, 1\}$-coloring is indicated with red vertices being assigned value $1/2$ and black vertices being assigned value $0$. It can be readily checked that the sum of value assignments in every maximum clique  is $1$.  }
\end{figure}
%As we shall see, the techniques developed in proving the central theorem from the main text allow to construct novel vector sets that do not admit outcome assignments from general alphabet sets of the form $\{0, p, 1-p, 1\}$ for any $p >0$, $p \neq 1/d$. Indeed, a lemma used in the proof provides a novel tool in KS constructions, namely a gadget that allows to convert into a KS proof any vector set $\mathcal{S}' \subset \mathbb{C}^d$ for which $d$ linearly independent vectors have to be assigned value $0$ in any coloring. In other words, if $d$ linearly independent vectors are necessarily assigned value $0$ in a coloring then a constructive method exists to augment the vector set into one in which a basis ($d$ mutually orthogonal vectors) are necessarily assigned value $0$ - thus providing a contradiction to the KS coloring rules.

%captures a central insight of Kochen-Specker-type vector systems, namely that any 

\section{Physical Interpretation of $\{0,1/2,1\}$-colorings as Fundamentally Binary Outcome Assignments}

In this section, we describe an intriguing interpretation of $\{0,1/2,1\}$-colorings in terms of a particular interesting class of generalised probabilistic theorems called "Fundamentally Binary Theories" \cite{KVC17}. This line of inquiry follows a huge research effort devoted to understanding physical and information-theoretic principles that enforce the quantum formalism. A natural class of alternatives to quantum theory are the Fundamentally Binary theories - these are no-signalling theories in which measurements yielding many outcomes are constructed by selecting from binary measurements. In other words, these theories posit that on a fundamental level only measurements with two outcomes exist, and scenarios where a measurement has more than two outcomes are achieved by classical post-processing of one or more two-outcome measurements. Fundamentally binary correlations are characterised as the convex hull of all consistent behaviours $\{P_{A|X}(a|x)\}$ obeying the constraint that for all measurements $x$, it holds that $P_{A|X}(a|x) = 0$ for all but (at most) two outcomes $a$.

Consider an orthogonality graph $G$ with a set of maximum cliques (contexts) $\mathcal{C} = \{C_1, \ldots, C_k\}$ where each clique $C_i$ if of size $\omega(G) = d$ (corresponding to a complete projective measurement). A behaviour (aka box or correlation) $\{P_{A|X}(a|x)\}$ is a set of conditional probability distributions with input $x \in \{C_1, \ldots, C_k\}$ and output $a \in \{1, \ldots, d\}$. A behaviour is said to be compatible with an orthogonality graph $G$ if it is a family of normalized probability distributions such that for each $c \in \{C_1, \ldots, C_k\}$, there is a corresponding probability distribution in this family. A behaviour is said to be consistent (or non-disturbing) if for all pairs $C_1, C_2$ it holds that
\begin{equation}
P_{A|X}(a=s|x = C_1) = P_{A|X}(a=s|x = C_2) \quad \forall s \in C_1 \cap C_2.
\end{equation}
In other words, the probability of the particular measurement outcome is independent of which context it is measured in. A consistent behaviour is said to be fundamentally binary if it belongs to the convex hull of consistent boxes  $\{P_{A|X}(a|x)\}$ for which for all measurements $x$, it holds that $P_{A|X}(a|x) = 0$ for all but (at most) two outcomes $a$, together with any box obtained by local classical post-processing of such boxes. That is, at most two outcomes in each measurement have non-zero probabilities (that sum to unity) and the remaining outcomes have probability $0$. 

In \cite{LRHRH23}, building on a graph-theoretic result for the so-called Fractional Stable Set Polytope $(FSTAB(G))$  for a graph $G$ by Nemhauser and Trotter \cite{NT75} and Balinski \cite{Balinski65}, we provided a characterization of the extremal points of the convex polytope of Fundamentally Binary correlations. 
\begin{pro}[\cite{LRHRH23}]
\label{prop:FB-ext}
Let $\{P^{\text{ext}}_{A|X}(a|x)\}$ be an extremal behaviour in the set of Fundamentally Binary correlations. Then for every measurement $x$ and every outcome $a$ it holds that $P^{\text{ext}}_{A|X}(a|x) \in \{0, 1/2, 1\}$.
\end{pro}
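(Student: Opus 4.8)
The plan is to translate the extremality question into the combinatorial geometry of the fractional stable set polytope and then invoke its half-integrality. First I would use the consistency (no-disturbance) constraint to collapse the behaviour to a single function $f:V(G)\to[0,1]$ on the vertices of the orthogonality graph, setting $f(v)=P_{A|X}(a=v\mid x=C)$ independently of the context $C\ni v$; normalization then reads $\sum_{v\in C}f(v)=1$ for every maximum clique $C$, while the fundamentally binary restriction says that in each such clique at most two vertices carry nonzero weight. Since the fundamentally binary set is the convex hull of these ``binary'' consistent boxes (a local classical post-processing of a binary box is a convex mixture of deterministically relabelled binary boxes, each of which still has at most two nonzero outcomes per measurement, so post-processings contribute no new generators), every extreme point is itself a binary box. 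The task thus reduces to showing that a binary box which is not a proper convex combination of other binary boxes must take values in $\{0,1/2,1\}$.

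Second, I would fix such an extreme box $f$ and record its support together with the pairing structure it induces. Each maximum clique contributes either a single vertex with $f(v)=1$, or an orthogonal pair $\{u,w\}$ with $f(u)+f(w)=1$. Collecting these pairs as edges defines a graph $H$ on the support of $f$, and $f$ is exactly a solution of the linear system $f(u)+f(w)=1$ over the edges of $H$, supplemented by $f(v)=1$ at the singleton cliques, subject to $f\ge 0$. This is precisely the regime in which the Nemhauser--Trotter/Balinski half-integrality theorem for $FSTAB(G)$ applies: the vertices of the polytope cut out by the edge inequalities $f(u)+f(w)\le 1$ are half-integral, which is the engine cited by the paper.

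Third, I would run the underlying mechanism directly on $H$ to confirm the admissible values. On a connected component the equations $f(u)+f(w)=1$ propagate a single parameter $s$ (one class of a two-colouring receives $s$, the other $1-s$). If the component is bipartite and unpinned, $s$ ranges over an open interval, so $f$ admits the perturbation $s\mapsto s\pm\varepsilon$ staying a binary box, contradicting extremality; hence at an extreme point such a component sits at the boundary and takes values in $\{0,1\}$. If the component contains an odd cycle, propagating around it forces $s=1-s$, i.e. $s=1/2$, so every value there equals $1/2$; and any singleton constraint $f(v)=1$ forces integral values throughout its component. In all cases $f(v)\in\{0,1/2,1\}$, which is the assertion.

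The main obstacle, I expect, is not the half-integrality itself (which is classical) but the faithfulness of the first reduction: one must verify carefully that local classical post-processing of fundamentally binary boxes yields nothing outside the convex hull of genuinely two-outcome consistent boxes, and that the consistency constraints glue the per-clique data into one well-defined vertex function without introducing spurious degrees of freedom. Matching the normalization \emph{equalities} on maximum cliques to the edge \emph{inequalities} of the stable set relaxation, so that the fundamentally binary polytope is identified with (a face of) $FSTAB(G)$ exactly, is where the real care is needed.
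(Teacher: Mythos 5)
You should first be aware that the paper does not actually prove Proposition~\ref{prop:FB-ext}: it imports it from \cite{LRHRH23}, remarking only that it rests on the Nemhauser--Trotter/Balinski half-integrality of $FSTAB(G)$ \cite{NT75, Balinski65}. Your steps 2--3 are a correct, self-contained re-derivation of exactly that mechanism: vertices pinned to $1$ are isolated in your support graph $H$ (any $H$-neighbour would be forced to $0$, off the support); every maximum clique meets the support in either such a singleton or a single $H$-edge; on a bipartite unpinned component the $\pm\varepsilon$ perturbation stays inside the set of consistent binary boxes (normalization is preserved edge-by-edge, the support is unchanged), contradicting extremality; an odd cycle forces the value $1/2$. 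So, modulo your step 1, you are on the same route as the cited proof --- you simply unpack the half-integrality engine instead of black-boxing it, which is a legitimate trade (a longer but elementary argument versus a citation plus a minimal-face argument to transfer $FSTAB$ vertices to clique-normalized boxes).

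The genuine problem is your step 1, and it is worse than a matter of ``care''. Your justification --- a stochastic post-processing is a convex mixture of deterministically relabelled binary boxes --- is true as an identity of behaviours, but the deterministic components are \emph{context-dependent} relabelings and generically violate consistency (no-disturbance): a vertex shared by two cliques can receive different probabilities in the two components. Hence these components need not lie in the fundamentally binary set at all, and extremality of $P$ within that set cannot be applied to this decomposition to conclude that $P$ is a binary box. In fact, if ``local classical post-processing'' is read literally as arbitrary stochastic maps $R_x(a|a')$ applied per context, the definition trivializes: the post-processing that ignores the fundamental outcome and outputs a fixed distribution $\mu_x$ turns a binary box into \emph{any} prescribed behaviour, so every consistent behaviour would count as fundamentally binary, and the proposition would then be false on any graph whose consistent polytope has a non-half-integral vertex while still admitting binary boxes. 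The statement is really about the polytope $\mathrm{conv}\{\text{consistent boxes with at most two nonzero outcomes per measurement}\}$ --- this is the object treated in \cite{LRHRH23}, and it absorbs deterministic coarse-grainings automatically, since a deterministic relabeling of a binary box that happens to be consistent is again a consistent binary box. Under that reading your step 1 is not needed (its conclusion is the definition) and your steps 2--3 constitute a complete, correct proof; as actually written, however, your step-1 argument does not close, so you should either adopt that definition explicitly or restrict the post-processing clause to deterministic relabelings before running the perturbation argument.
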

In light of Prop. \ref{prop:FB-ext}, our main theorem \ref{thm:main} can also be stated as a generalisation of the Kochen-Specker in the following way: for every quantum system belonging to a Hilbert space of dimension greater than two, irrespective of its actual state, a finite set of measurements exists that does not admit a fundamentally binary outcome assignment. The reason being that if a non-$\{0, 1/2, 1\}$-colorable vector set admits a fundamentally binary outcome assignment (even one that is not extremal), then the corresponding behaviour $\{P_{A|X}(a|x)\}$ is necessarily convex decomposable as $\{P_{A|X}(a|x)\} = \sum_i q_i P^{\text{ext}, (i),}_{A|X}(a|x)$ for some $q_i \geq 0, \sum_i q_i = 1$ and extremal behaviours $\{P^{\text{ext}, (i)}_{A|X}(a|x)\}$. However, as we have seen, for any extremal behaviour, the probabilities necessarily take values in $\{0, 1/2, 1\}$ which would provide a $\{0, 1/2, 1\}$-coloring for the vector set, giving a contradiction. Therefore, it follows that the set of Fundamentally Binary correlations for a non-$\{0, 1/2, 1\}$-colorable set is empty, giving rise to a Kochen-Specker-type theorem (recall that in the case of the non-$\{0,1\}$-colorable vector sets proving the original KS theorem, it is the polytope of classical behaviors obtained as the convex hull of deterministic $\{0,1\}$ behaviors that is empty).

\section{Proof of Theorem \ref{thm:main}.}
In this Appendix, we provide a proof of the main theorem, namely that there exist finite vector sets in $\mathbb{C}^3$ that do not admit any $\mathcal{O}$-valued outcome assignment, where $\mathcal{O} = \{0, p, 1-p, 1\}$ with $p \neq 1/3$. Our proof strategy is as follows. The proof consists of three steps. 

In the first step, we provide a construction to rule out any one-valued vectors in the $\mathcal{O}$-coloring. That is, we show that if for a finite vector set $\mathcal{S} = \{|v_1 \rangle, \ldots, |v_n \rangle\}$ and a coloring $f: \mathcal{S} \rightarrow \mathcal{O}$ it holds that $f(|v_1 \rangle) = 1$ then there exists set $\mathcal{S}_1^{v_1}$ such that augmenting $f$ to $\mathcal{S} \cup \mathcal{S}^{v_1}$ leads to a contradiction. In other words, the finite vector set $\mathcal{S} \cup \mathcal{S}_1^{v_1} \cup \mathcal{S}_1^{v_2} \cup \ldots \cup \mathcal{S}_1^{v_n}$ can only admit $\mathcal{O}$-valued outcome assignments $f$ for which $f(|v_i \rangle) \neq 1$ for all $|v_i \rangle \in \mathcal{S}$. We therefore reduce the problem to proving the existence of a finite vector set $\mathcal{S}$ that does not admit an outcome assignment in $\mathcal{O} \setminus \{1\}$. 

In the next steps, we go about constructing such a set $\mathcal{S}$. In the second step, we present a simple small vector set $\mathcal{S}_2$ (consisting of $9$ vectors in three dimensions) that has the property that in any $\mathcal{O} \setminus \{1\}$-outcome assignment $f$ it holds that a triple $(|v_1 \rangle, |v_2 \rangle, |v_3 \rangle)$ of linearly independent vectors (to be specific, one of twelve triples related by a symmetry transformation) are all assigned value $0$, i.e., $f(|v_1 \rangle) = f(|v_2 \rangle) = f(|v_3 \rangle) = 0$. In the third step, we present another gadget, a vector set $\mathcal{S}_3$ that includes the vectors $|v_1 \rangle, |v_2 \rangle, |v_3 \rangle$. The gadget $\mathcal{S}_3$ has the property that no outcome assignment $f: \mathcal{S}_3 \rightarrow \mathcal{O} \setminus \{1\}$ exists such that $f(|v_1 \rangle) = f(|v_2 \rangle) = f(|v_3 \rangle) = 0$. In other words, the conjunction of (the finite vector sets from) the three steps gives a vector set that does not admit $\mathcal{O}$-valued outcome assignments.

\subsection{Ruling out one-valued projectors in finite outcome assignments}
We first address the question of reducing the case of $\mathcal{O}$-valued assignments to $\mathcal{O} \setminus \{1\}$-assignments. That is, we show a construction that allows to rule out $f(|v \rangle) = 1$ for any fixed vector $|v\rangle$. While we will present this Lemma in the context of $\mathcal{O}$-valued assignments it will be apparent from the proof that the statement can be extended to arbitrary assignments from a finite set $\{p_1, p_2, \ldots, p_n, 1\}$. The proof of the Lemma builds upon \cite{RRHP+20} where we provided a simple proof of an Extended KS Theorem (Logical Indeterminacy Principle) first proposed by Pitowsky and Hrushovski in \cite{Pitowsky98, HP04}.  

%Specifically, we proved the following. 
\begin{theorem}[\cite{RRHP+20, Pitowsky98, HP04}]
\label{thm:LIP}
Let $|v_1 \rangle$ and $|v_2 \rangle$ be any two non-orthogonal vectors in $\mathbb{C}^d$ with $d \geq 3$. Then there is a finite set of vectors $S^{v_1, v_2}_{\text{LIP}} \subset \mathbb{C}^d$ with $|v_1 \rangle, |v_2 \rangle \in S^{v_1, v_2}_{\text{LIP}}$ such that for any $[0,1]$-assignment $f: S \rightarrow [0,1]$, it holds that if $f(|v_1 \rangle) = 1$ then $0 < f(|v_2 \rangle) < 1$.
\end{theorem}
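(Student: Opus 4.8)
The plan is to prove the two strict inequalities $f(|v_2\rangle)<1$ and $f(|v_2\rangle)>0$ separately under the hypothesis $f(|v_1\rangle)=1$, and to take $S^{v_1,v_2}_{\mathrm{LIP}}$ as the finite union of the vector sets used in each part. The whole argument rests on two elementary forcing rules that hold for every $[0,1]$-assignment obeying the KS rules, not merely for $\{0,1\}$-colorings: (i) if $f(|u\rangle)=1$ and $|w\rangle$ is orthogonal to $|u\rangle$, then $f(|w\rangle)=0$ (complete $\{|u\rangle,|w\rangle\}$ to a basis; nonnegativity and completeness leave no room); and (ii) in an orthonormal basis $\{|a\rangle,|b\rangle,|c\rangle\}$ of a three-dimensional space, $f(|a\rangle)=f(|b\rangle)=0$ forces $f(|c\rangle)=1$ (completeness). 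The guiding principle is that any contradiction derived from (i)--(ii) and exclusivity alone transfers verbatim from the discrete to the continuous setting. For $d>3$ I would confine the entire construction to a fixed three-dimensional subspace $W\supseteq\mathrm{span}(|v_1\rangle,|v_2\rangle)$ and adjoin a fixed orthonormal basis of $W^{\perp}$; since each such vector is orthogonal to $|v_1\rangle$, rule (i) sends them all to $0$ the moment $f(|v_1\rangle)=1$, so that every triad in $W$ functions as a complete basis and the three-dimensional reasoning below applies unchanged.

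For the upper bound I would invoke a $01$-gadget of the kind constructed in \cite{RRHP+20}, with distinguished non-orthogonal vertices $|v_1\rangle,|v_2\rangle$, realised for the given angle (chaining several gadgets through auxiliary intermediate vectors if a direct one is unavailable). The essential requirement is that the gadget's defining feature---no coloring assigns $1$ to both tips---be witnessed by a chain of the elementary rules: fixing $f(|v_1\rangle)=1$ drives its neighbours to $0$ by (i), these force further vertices to $1$ by (ii), and eventually two adjacent vertices are both forced to $1$, violating exclusivity. Because this derivation never leaves the repertoire (i)--(ii) plus exclusivity, it forbids $f(|v_1\rangle)=f(|v_2\rangle)=1$ for every $[0,1]$-frame function, and hence $f(|v_1\rangle)=1\Rightarrow f(|v_2\rangle)<1$.

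For the lower bound I would use the geometry of the plane $P=\mathrm{span}(|v_1\rangle,|v_2\rangle)$. Let $|a\rangle$ be a unit vector orthogonal to $P$ and let $|b\rangle\in P$ be the unit vector orthogonal to $|v_2\rangle$, so that $\{|v_2\rangle,|b\rangle,|a\rangle\}$ is an orthonormal triad and $\langle v_1|b\rangle\propto 1-|\langle v_1|v_2\rangle|^2\neq 0$, making $|b\rangle$ non-orthogonal to $|v_1\rangle$. Rule (i) applied to $|a\rangle\perp|v_1\rangle$ gives $f(|a\rangle)=0$; assuming toward a contradiction that $f(|v_2\rangle)=0$ as well, rule (ii) applied to $\{|v_2\rangle,|b\rangle,|a\rangle\}$ then forces $f(|b\rangle)=1$. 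Attaching a second $01$-gadget between the non-orthogonal pair $|v_1\rangle,|b\rangle$ now contradicts $f(|v_1\rangle)=f(|b\rangle)=1$ exactly as in the upper bound, so $f(|v_2\rangle)=0$ is impossible and $f(|v_2\rangle)>0$.

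Finally I would set $S^{v_1,v_2}_{\mathrm{LIP}}$ to be the union of the two $01$-gadgets together with the finitely many auxiliary vectors $|a\rangle,|b\rangle$, their basis partners, and (for $d>3$) the chosen basis of $W^{\perp}$; this is a finite set containing $|v_1\rangle$ and $|v_2\rangle$, and the two parts above establish the claim. I expect the decisive obstacle to be the first step: ensuring a $01$-gadget exists for the arbitrary angle between $|v_1\rangle$ and $|v_2\rangle$ and, more importantly, that its non-colorability with both tips equal to $1$ is certified purely by the elementary rules, so that the property survives the relaxation from $\{0,1\}$- to $[0,1]$-valued assignments. Realising arbitrary angles is likely to require composing gadgets through intermediate vectors, and it is precisely this gadget existence-and-transfer that the construction of \cite{RRHP+20} is designed to supply.
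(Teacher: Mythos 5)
Your proposal is correct and matches the paper's proof essentially step for step: the upper bound comes from a $01$-gadget of \cite{RRHP+20} between $|v_1\rangle$ and $|v_2\rangle$ whose ``not both equal to $1$'' property is forced by exclusivity and completeness and hence survives the relaxation to $[0,1]$-assignments, and the lower bound comes from adjoining the vector orthogonal to $\mathrm{span}(|v_1\rangle,|v_2\rangle)$ (the paper's $|v_{15}\rangle$, your $|a\rangle$) and the in-plane completion of $|v_2\rangle$ to a basis (the paper's $|v_{16}\rangle$, your $|b\rangle$), with a second gadget between $|v_1\rangle$ and that completion ruling out $f(|v_2\rangle)=0$. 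Even your handling of $d>3$ (padding with a basis of the orthogonal complement, which is forced to $0$ once $f(|v_1\rangle)=1$) is the same device the paper uses.
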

\begin{figure}\label{fig:Lem1}
\centering
\subfigure[Gadget to rule out $\left(f(v_1) = 1 \right) \wedge \left( f(v_2) =1 \right)$]{\label{fig:LIPa}\includegraphics[width=.5\columnwidth]{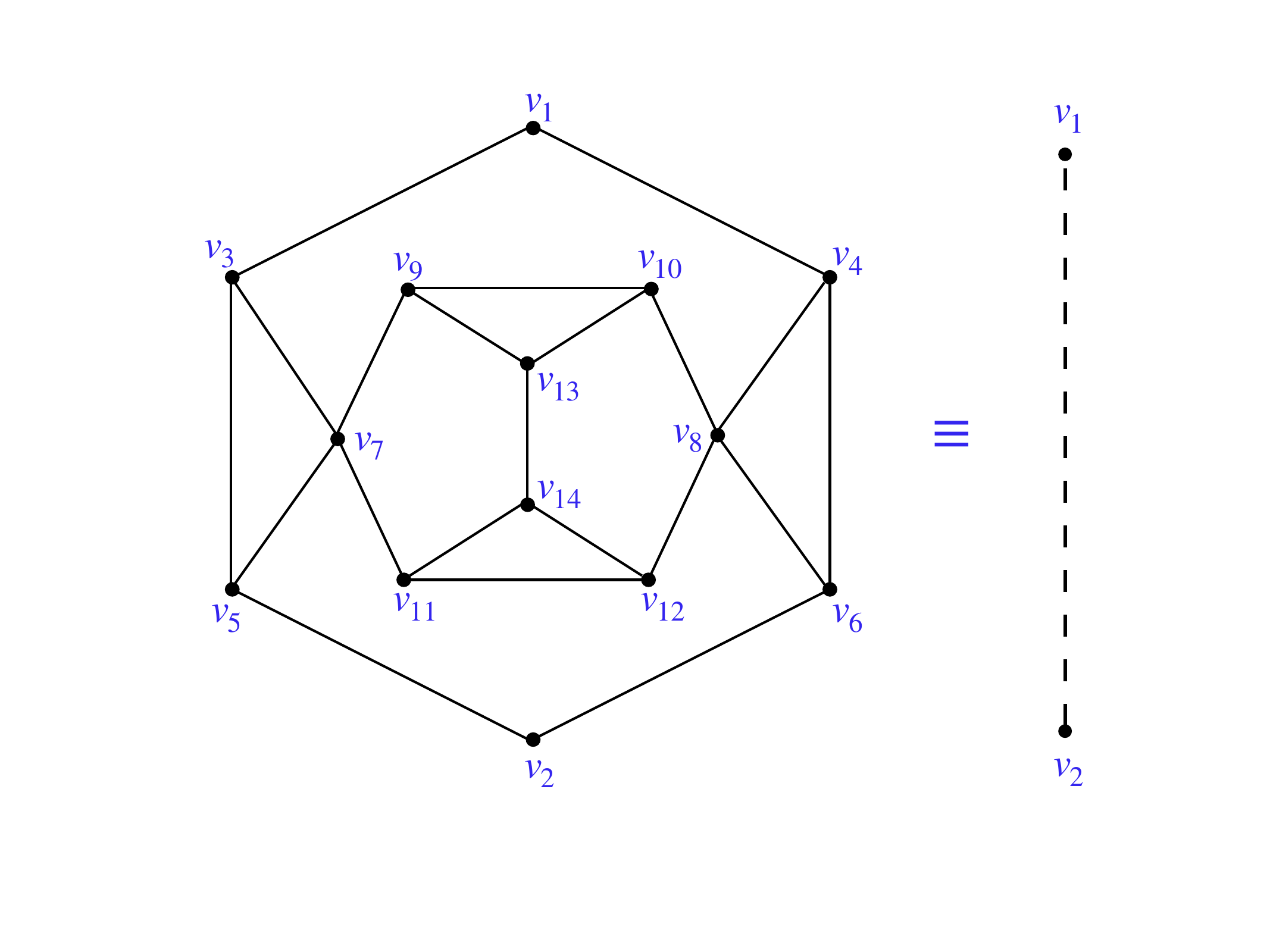}}%\hspace{4em}
\subfigure[Gadget to show $\left(f(v_1) = 1 \right) \rightarrow \left(0 < f(v_2) < 1 \right)$ ]{\label{fig:LIPb}\includegraphics[width=.5\columnwidth]{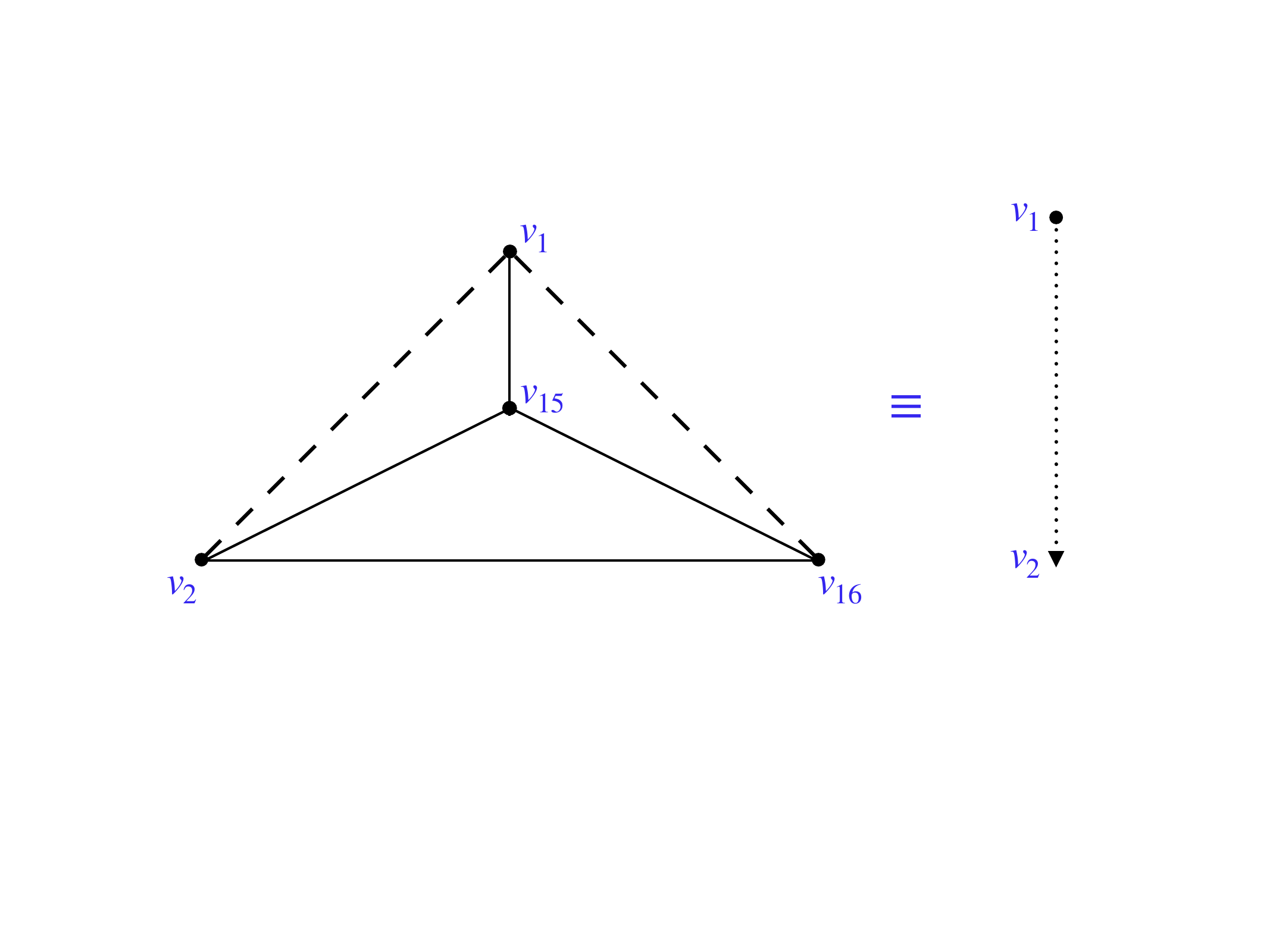}}\hspace{4em}
\caption{The gadgets to prove Thm. \ref{thm:LIP} are shown above. (a) Consider any assignment $f: S \rightarrow [0,1]$ for the set of (non-normalised represented by $\tilde{v}$) vectors $S$ defined as follows and represented by the orthogonality relations in the graph (a). $\langle \tilde{v}_1 |  =  (-\sqrt{2}, -1, 1)$, $\langle \tilde{v}_2 | = (\sqrt{2}, 1,1)$, $\langle \tilde{v}_3 | = (0,1,1)$, $\langle \tilde{v}_4| = (-2\sqrt{2}, 1, -3)$, $\langle \tilde{v}_5 | = (0,-1,1)$, $\langle \tilde{v}_6| = (2 \sqrt{2}, -1, -3)$, $\langle \tilde{v}_7 | = (1,0,0)$, $\langle \tilde{v}_8| = (1, 2\sqrt{2}, 0)$, $\langle \tilde{v}_9| = (0, \sqrt{3}, -1)$, $\langle \tilde{v}_{10} | = (-2 \sqrt{2}, 1, \sqrt{3})$, $\langle \tilde{v}_{11} | = (0, \sqrt{3}, 1)$, $\langle \tilde{v}_{12} | = (2 \sqrt{2}, -1, \sqrt{3})$, $\langle \tilde{v}_{13} | = (\sqrt{2}, 1, \sqrt{3})$, $\langle \tilde{v}_{14}| = (-\sqrt{2}, -1, \sqrt{3})$. It is readily checked that in any probabilistic assignment $f: S \rightarrow [0,1]$ it cannot be the case that both $f(|v_1 \rangle) = 1$ and $f(|v_2 \rangle) = 1$. While this construction is shown for the specific case
$\left|\langle v_2 |v_1 \rangle \right| = 1/2$ it readily extends to arbitrary $|v_1 \rangle, |v_2 \rangle \in \mathbb{C}^3$ through a repeated 'gadget-within-gadget' construction as shown in \cite{RRHP+20}. Gadgets of this type (for general overlap $\langle v_1 | v_2 \rangle$) forbidding assignments of value $1$ to $v_1, v_2$ are denoted by a dashed edge between $v_1$ and $v_2$. (b) Consider any assignment $f: S' \rightarrow [0,1]$ for the set of (non-normalised) vectors $S'$ defined as follows and represented by the orthogonality relations in the graph (b). $\langle \tilde{v}_1| = (-\sqrt{2}, -1, 1)$, $\langle \tilde{v}_2 | = (\sqrt{2}, 1, 1)$, $\langle \tilde{v}_{15} | = (-1, \sqrt{2}, 0)$, $\langle \tilde{v}_{16}| = (\sqrt{2},1,-3)$. In this figure, the dashed edges denote gadgets of the type in graph (a) i.e., such that $\left(f(v_1) =1 \right) \rightarrow \left(f(v_2) \neq 1 \right)$ and $\left(f(v_1) =1 \right) \rightarrow \left(f(v_{16}) \neq 1 \right)$ in any $[0,1]$-outcome assignment. Since $\left(f(v_1) = 1 \right) \rightarrow \left(f(v_{15}) = 0 \right)$ it follows that $\left(f(v_1) = 1 \right) \rightarrow \left(0 < f(v_2) < 1 \right)$ since $\left(f(v_2) = 0 \right)$ would imply $\left(f(v_{16}) = 1 \right)$ to satisfy the KS completeness condition. Gadgets of this type (having the property that $\left(f(v_1) = 1 \right) \rightarrow \left(0 < f(v_2) < 1 \right)$) are denoted by a dotted directed edge between $v_1$ and $v_2$.  }
\end{figure}

The explicit construction of $S^{v_1, v_2}_{\text{LIP}}$ proving Thm. \ref{thm:LIP} for arbitrary $|v_1 \rangle$ and $|v_2 \rangle$ was shown by us in \cite{RRHP+20} (in fact a slightly stronger statement was shown namely that $f(|v_1\rangle), f(|v_2\rangle) \in \{0,1\}$ if and only if $f(|v_1 \rangle) = f(|v_2 \rangle) = 0$). For completeness, we illustrate the central aspects of the proofs through Figs. \ref{fig:LIPa} and \ref{fig:LIPb}. The construction in Fig. \ref{fig:LIPa} shows a gadget with the property that in any $[0,1]$ assignment, one cannot have both end vertices taking value $1$, i.e., $f(v_1) = f(v_2) = 1$ is disallowed. While the construction is shown in Fig. \ref{fig:LIPa} for $|\langle v_2 | v_1 \rangle| = 1/2$, the construction for general $|v_1 \rangle, |v_2 \rangle$ follows a similar 'gadget-within-gadget' construction and can be seen in \cite{RRHP+20}.  The construction in Fig. \ref{fig:LIPb} shows a gadget with the property that in any $[0,1]$ assignment, if one end vertex takes value $1$ then the other cannot be deterministic, i.e., if $f(v_1) = 1$ then $0 < f(v_2) < 1$.
%The construction of the set $S^{v_1, v_2}_{\text{LIP}}$ was shown explicitly in our earlier work \cite{}. 
%For convenience, the set is illustrated in Fig. \ref{} for the case $|v_1 \rangle = \ldots$ and $|v_2 \rangle = \ldots$.  
Note that the above Thm. \ref{thm:LIP} is concerned with general probabilistic assignments (from $[0,1]$ and obeying the KS rules) which contains the specific $\mathcal{O}$-valued assignments as a special case.

Theorem \ref{thm:LIP} states that for any two non-orthogonal vectors $|v_1 \rangle, |v_2 \rangle$ there is a finite set $S^{v_1, v_2}_{\text{LIP}}$ such that in all probabilistic assignments $f: S^{v_1, v_2}_{\text{LIP}} \rightarrow [0,1]$ it holds that if $f(|v_1 \rangle) = 1$ then $0 < f(|v_2 \rangle) < 1$. Therefore, if for any fixed vector $|v_1 \rangle$ it happens to be the case that $f(|v_1 \rangle) = 1$ then by adding the set $S^{v_1, v_2}_{\text{LIP}}$ for any other fixed vector $|v_2 \rangle$ we can ensure that $0< f(|v_2 \rangle) < 1$ for that probabilistic outcome assignment $f$. We will treat the construction of the set $S^{v_1, v_2}_{\text{LIP}}$ in Thm. \ref{thm:LIP} as a building block to prove the following Lemma. 
\begin{lemma}
\label{lem:LIP}
Let $|v_1 \rangle$ be any fixed vector in  $\mathbb{C}^d$ with $d \geq 3$. There is a finite set of vectors $\mathcal{S}_1^{v_1} \subset \mathbb{C}^d$ with $|v_1 \rangle \in \mathcal{S}_1^{v_1}$ such that there does not exist any outcome assignment $f: \mathcal{S}_1^{v_1} \rightarrow \mathcal{O}$ with $f(|v_1 \rangle) = 1$.
\end{lemma}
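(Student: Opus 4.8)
The plan is to use Theorem~\ref{thm:LIP} as a black box that converts the forbidden hypothesis $f(|v_1\rangle)=1$ into a strong constraint on the \emph{other} vectors. Concretely, for any vector $|w\rangle$ non-orthogonal to $|v_1\rangle$ to which we have attached the gadget $S^{v_1,w}_{\text{LIP}}$, the theorem forces $0<f(|w\rangle)<1$ whenever $f(|v_1\rangle)=1$. Since the only elements of $\mathcal{O}=\{0,p,1-p,1\}$ lying strictly inside $(0,1)$ are $p$ and $1-p$ (both genuinely interior once $p\in(0,1/2]$), this pins every such $|w\rangle$ to the two-element set $\{p,1-p\}$. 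The contradiction will then be produced by the Completeness rule applied to a single cleverly chosen basis: the sum of $d$ values drawn from $\{p,1-p\}$ simply cannot equal $1$ once $p\neq 1/d$ and $p\in[0,1/2]$.

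First I would fix one orthonormal basis $\{|w_1\rangle,\dots,|w_d\rangle\}$ of $\mathbb{C}^d$ with $\langle v_1|w_i\rangle\neq 0$ for every $i$; such a basis exists generically (e.g.\ take the columns of a unitary, such as a Fourier matrix in the frame where $|v_1\rangle$ is a coordinate vector, all of whose entries in the row dual to $|v_1\rangle$ are nonzero). For each $i$ I would invoke Theorem~\ref{thm:LIP} on the non-orthogonal pair $(|v_1\rangle,|w_i\rangle)$ to obtain the finite gadget $S^{v_1,w_i}_{\text{LIP}}\subset\mathbb{C}^d$ containing both endpoints, and then set
\begin{equation}
\mathcal{S}_1^{v_1}:=\{|v_1\rangle,|w_1\rangle,\dots,|w_d\rangle\}\cup\bigcup_{i=1}^{d}S^{v_1,w_i}_{\text{LIP}},
\end{equation}
a finite subset of $\mathbb{C}^d$ containing $|v_1\rangle$.

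Now suppose for contradiction that some $f:\mathcal{S}_1^{v_1}\to\mathcal{O}$ obeying the Kochen--Specker rules has $f(|v_1\rangle)=1$. Restricting $f$ to each gadget $S^{v_1,w_i}_{\text{LIP}}$ gives a valid $[0,1]$-assignment there (exclusivity and completeness for the sub-cliques are inherited, since the whole set has clique number $d$), so Theorem~\ref{thm:LIP} forces $0<f(|w_i\rangle)<1$, whence $f(|w_i\rangle)\in\{p,1-p\}$ for every $i$. Letting $k$ be the number of indices with $f(|w_i\rangle)=p$, Completeness on the basis $\{|w_1\rangle,\dots,|w_d\rangle\}$ yields
\begin{equation}
1=\sum_{i=1}^{d}f(|w_i\rangle)=kp+(d-k)(1-p)=(d-k)+(2k-d)p.
\end{equation}
The final step is an elementary verification that this has no admissible solution. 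If $2k=d$ the equation reads $d/2=1$, impossible for $d\geq 3$; otherwise $p=\frac{(d-1)-k}{d-2k}$, and a short check shows $k=d$ forces $p=1/d$ (the maximally-mixed value, excluded by hypothesis), $k=d-1$ forces $p=0$, and every remaining $k$ gives $p>1/2$ (when $k<d/2$) or $p<0$ (when $k>d/2$), i.e.\ outside $[0,1/2]$. The residual boundary value $p=0$ is handled directly: then $1-p=1$ so $\{p,1-p\}=\{0,1\}$, and the strict inequalities $0<f(|w_i\rangle)<1$ supplied by Theorem~\ref{thm:LIP} are already self-contradictory (a single gadget suffices). Either way we reach a contradiction, so no $\mathcal{O}$-assignment with $f(|v_1\rangle)=1$ exists.

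I expect the only real subtleties to be administrative rather than conceptual. One must (i) confirm the existence of a basis all of whose vectors are non-orthogonal to $|v_1\rangle$, which is the genericity statement above, and (ii) verify that unioning the gadgets introduces no spurious orthogonality relation that would corrupt the ``restrict to a gadget'' step or change the clique number from $d$. Point~(ii) is benign because adjoining further vectors only imposes additional linear constraints, which can merely tighten---never relax---the LIP implication $f(|v_1\rangle)=1\Rightarrow 0<f(|w_i\rangle)<1$. The genuinely load-bearing input is Theorem~\ref{thm:LIP}, whose explicit gadget is imported from~\cite{RRHP+20}; everything downstream is the short arithmetic showing that $d$ copies of $\{p,1-p\}$ never sum to $1$ away from $p=1/d$.
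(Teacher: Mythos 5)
Your proof is correct, and it assembles the contradiction differently from the paper, although both arguments lean on the same load-bearing input, Theorem~\ref{thm:LIP}. The paper's construction is a specific six-vector configuration in $\mathbb{R}^3$ ($\langle v_1|=(1,0,0)$, $\langle v_2|=(0,0,1)$, $\langle v_3|,\langle v_4| \propto (\pm 1,1,0)$, $\langle v_5|\propto(1,1,2)$, $\langle v_6|\propto(1,1,-1)$) with only \emph{three} LIP gadgets attached (from $v_1$ to $v_3$, $v_5$, $v_6$): exclusivity forces $f(v_2)=0$, and the two overlapping bases $\{v_2,v_3,v_4\}$ and $\{v_4,v_5,v_6\}$ give $f(v_3)=f(v_5)+f(v_6)$ with all three values pinned to $\{p,1-p\}$, which forces $1-p=2p$, i.e.\ the excluded value $p=1/3$; higher dimensions are then handled by padding with computational basis vectors (which are killed by exclusivity when $f(v_1)=1$). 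Your construction instead attaches $d$ LIP gadgets to an entire basis non-orthogonal to $|v_1\rangle$ and runs a counting argument on the single completeness equation $kp+(d-k)(1-p)=1$; your case analysis of that equation is correct, including the $p=0$ and $2k=d$ boundary cases. What each buys: the paper's gadget is smaller (three LIP gadgets versus $d$) and rules out the same value $p=1/3$ in every dimension, while yours is dimension-uniform and needs no padding. One caveat you should be aware of: the two constructions exclude \emph{different} values of $p$ when $d>3$. Yours forbids $p=1/d$ (at $k=d$ the assignment $f(w_i)=1/d$ for all $i$ satisfies your completeness equation, so your gadget cannot rule it out), whereas the paper's forbids $p=1/3$ in all dimensions. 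The paper is internally inconsistent on this point --- the abstract takes $p\in[0,1/d)\cup(1/d,1/2]$, matching your version, while Definition~1 and the appendix proof fix $p\neq 1/3$ --- so your lemma proves the abstract's convention but is strictly weaker than the paper's appendix statement for $d>3$. Since the lemma is only invoked inside $\mathbb{C}^3$ constructions for the main theorem, where $1/d=1/3$ and the two conventions coincide, this discrepancy is harmless for the paper's purposes.
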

\begin{proof}
%Let us denote the vector set constructed to prove Theorem \ref{thm:LIP} for the fixed vectors $|w_1 \rangle, |w_2 \rangle $ as $S^{w_1, w_2}_{\text{LIP}}$. 
%This construction has the property that for any two fixed vectors $|w_1 \rangle, |w_2 \rangle$ and for any $[0,1]$-assignment $f$, if $f(|w_1 \rangle) = 1$ then one necessarily has $0 < f(|w_2 \rangle < 1$.
% (this follows from the fact that neither $f(|v_1 \rangle) = f(|v_2\rangle) = 1$ nor $f(|v_1 \rangle) = 1, f(|v_2 \rangle) = 0$ are allowed for $S^{v_1, v_2}_{\text{LIP}}$). 
%The orthogonality graph $G^{w_1, w_2}_{\text{LIP}}$ is denoted as a dotted directed edge between $w_1$ and $w_2$ in Fig. \ref{fig:LIPb}. Unlike the usual edge in an orthogonality graph, this dotted edge denotes the fact that when the specific endpoint $w_1$ takes value $1$ then the other endpoint $w_2$ of the edge necessarily takes a value in $(0,1)$ (that is $0 < f(w_2) < 1$) for any $[0,1]$-assignment $f$. 

The proof goes through the orthogonality graph shown in the Fig. \ref{fig:Lem2}. In the graph, a directed dotted edge such as between $v_1$ and $v_3$ denotes the gadget constructed in Fig. \ref{fig:LIPb}. Unlike the usual edge in an orthogonality graph, this dotted edge denotes the fact that when the specific endpoint $v_1$ takes value $1$ then the other endpoint $v_3$ of the edge necessarily takes a value in $(0,1)$ (that is $0 < f(v_3) < 1$) for any $[0,1]$-assignment $f$. The corresponding orthogonal representation is given for the instance $|v_1 \rangle = (1,0,0)^T$. It follows that a corresponding orthogonal representation exists for any fixed vector in $\mathbb{C}^3$ through an appropriate unitary transformation. The vector set $\mathcal{S}_1^{v_1}$ is composed of the vectors $|v_1 \rangle,  |v_2 \rangle, \ldots, |v_6\rangle$ along with the vectors in the sets $S^{v_1, v_3}_{\text{LIP}}$, $S^{v_1, v_5}_{\text{LIP}}$ and $S^{v_1, v_6}_{\text{LIP}}$. 

\begin{figure}[h]
\label{fig:Lem2}
%\centering
\includegraphics[width=.4\columnwidth]{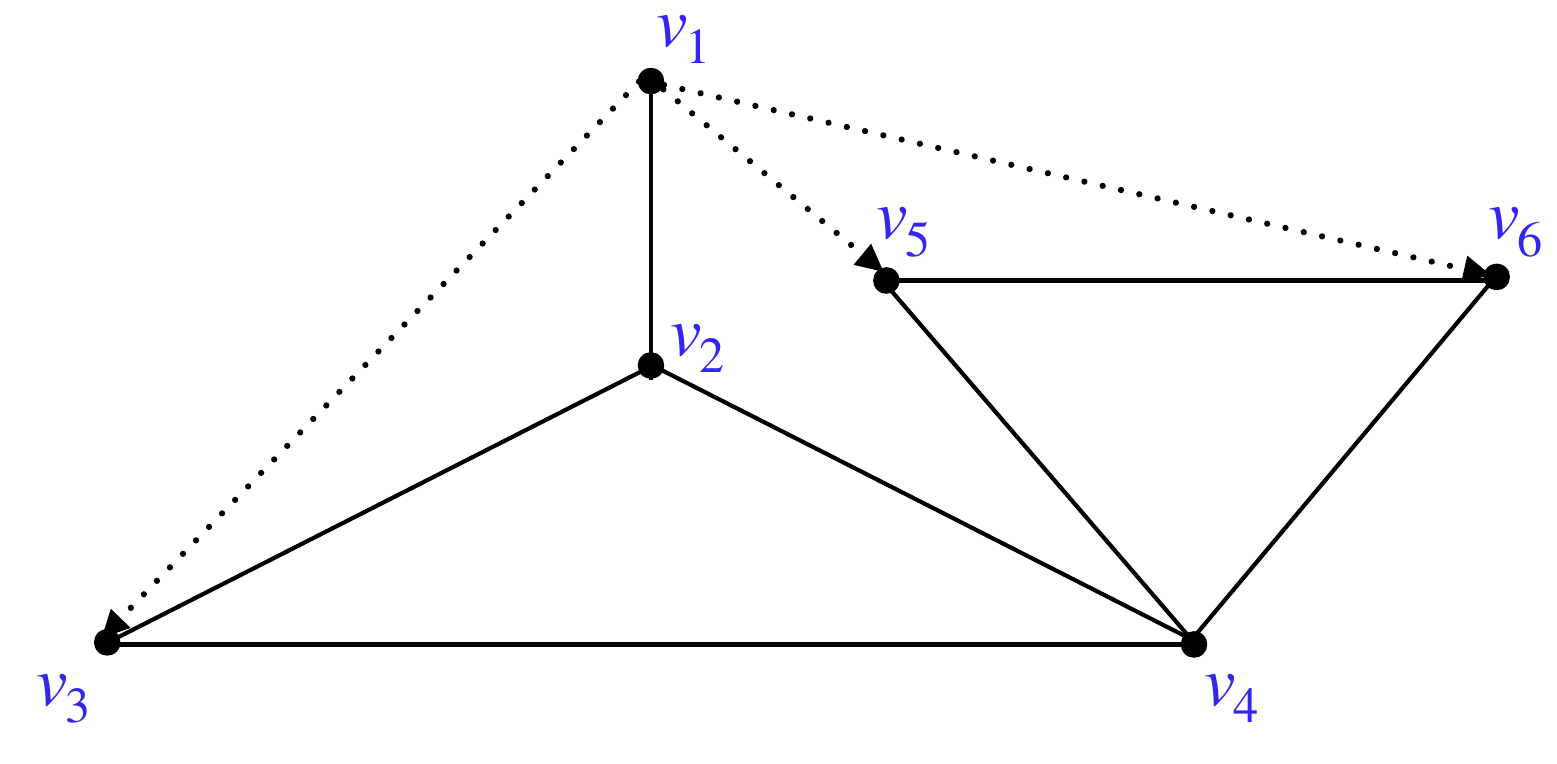}
\caption{The gadget to prove Lemma \ref{lem:LIP} is shown. In the figure, dotted directed edges denote the gadgets from Fig. \ref{fig:LIPb}. Consider the vector set $\mathcal{S}_1^{v_1}$ in the figure defined as $\langle v_1| = (1,0,0)$, $\langle v_2| = (0,0,1)$, $\langle v_3| = (1/\sqrt{2})(1,1,0)$, $\langle v_4| = (1/\sqrt{2})(-1,1,0)$, $\langle v_5| = (1/\sqrt{6})(1,1,2)$, $\langle v_6| = (1/\sqrt{3})(1,1,-1)$. Now consider any assignment $f: \mathcal{S}^{v_1} \rightarrow \{0,p,1-p,1\}$ with $p \leq 1/2$ and $p \neq 1/3$. Then if $f(v_1) = 1$ it holds that $0 < f(v_3) < 1$, $0 < f(v_5) < 1$ and $0 < f(v_6) < 1$ (by the property of the dotted edges) in addition to $f(v_2) = 0$ (by the KS exclusivity). The KS completeness condition gives $f(v_3) = f(v_5) + f(v_6)$ which in conjunction with the above gives $f(v_3) > f(v_5)$ and $f(v_3) > f(v_6)$. For $p \leq 1/2$ the above conditions can only be met when $f(v_3) = 1-p$, $f(v_5) = f(v_6) = p$ which however gives $p = 1/3$ which is excluded. Therefore no outcome assignment $f$ to the set $\mathcal{O}$ exists for the set $\mathcal{S}_1^{v_1}$ obeying $f(v_1) = 1$.   }
\end{figure}

Now suppose there exists an outcome assignment $f: \mathcal{S}_1^{v_1} \rightarrow \mathcal{O}$ with $f(|v_1 \rangle) = 1$. From the property stated in Lemma \ref{lem:LIP}, it follows that $0 < f(|v_3 \rangle) < 1$, $0 < f(|v_5 \rangle) < 1$ and $0< f(|v_6 \rangle) < 1$. And from the KS rules of exclusivity and completeness it follows that $f(|v_2 \rangle) = 0$, $f(|v_3 \rangle) + f(|v_4 \rangle) = 1$ and $f(|v_4 \rangle) + f(|v_5 \rangle) + f(|v_6 \rangle) = 1$. Therefore, we obtain that $f(|v_3 \rangle) = f(|v_5 \rangle) + f(|v_6 \rangle)$ with $0< f(|v_6 \rangle) < 1$ so that $f(|v_3 \rangle) > f(|v_5 \rangle)$. We also have  $f(|v_3 \rangle) = f(|v_5 \rangle) + f(|v_6 \rangle)$ with $0< f(|v_5 \rangle) < 1$ so that $f(|v_3 \rangle) > f(|v_6 \rangle)$. And since $f$ takes values in $\{0,p,1-p,1\}$ we have that for $p \leq 1/2$ the above conditions can only be met when $f(|v_3 \rangle) = 1-p, f(|v_5 \rangle) = f(|v_6 \rangle) = p$. However since $f(|v_3 \rangle) = f(|v_5 \rangle) + f(|v_6 \rangle)$ this gives $p = 1/3$ which is excluded. Therefore, no outcome assignment exists such that $f: \mathcal{S}_1^{v_1} \rightarrow \mathcal{O}$ and $f(v_1) = 1$ proving the statement. We note that when considering $\mathbb{C}^d$ for $d > 3$, the set $\mathcal{S}_1^v$ is simply modified as in traditional KS proofs by the addition of the computational basis vectors $(0,0,0,1,0,\ldots, 0)^T, \ldots, (0,0,0,0,\ldots, 0,1)^T$. In any assignment with $f(|v_1 \rangle) = 1$ these additional vectors are all assigned value $0$ so that the above argument directly extends.
\end{proof}

\subsection{Assigning $0$s to $d$ linearly independent vectors}
As explained, in the first step we have reduced the problem to proving the existence of a finite vector set $\mathcal{S}$ that does not admit an outcome assignment in $\mathcal{O} \setminus \{1\}$. Towards this end, in this subsection, we present a simple gadget that has the property that in any outcome assignment $f$ in the set $\mathcal{O} \setminus \{1\}$ it holds that a triple $(|v_1 \rangle, |v_2 \rangle, |v_3 \rangle)$ of linearly independent vectors (to be specific, one of twelve triples related by a symmetry transformation) are all assigned value $0$, i.e., $f(|v_1 \rangle) = f(|v_2 \rangle) = f(|v_3 \rangle) = 0$. Formally we consider the following set of vectors $\mathcal{S}_2$ in three dimensions represented by the orthogonality graph in Fig. \ref{fig:Lem3}.

\begin{figure}[h]
\label{fig:Lem3}
%\centering
\includegraphics[width=.35\columnwidth]{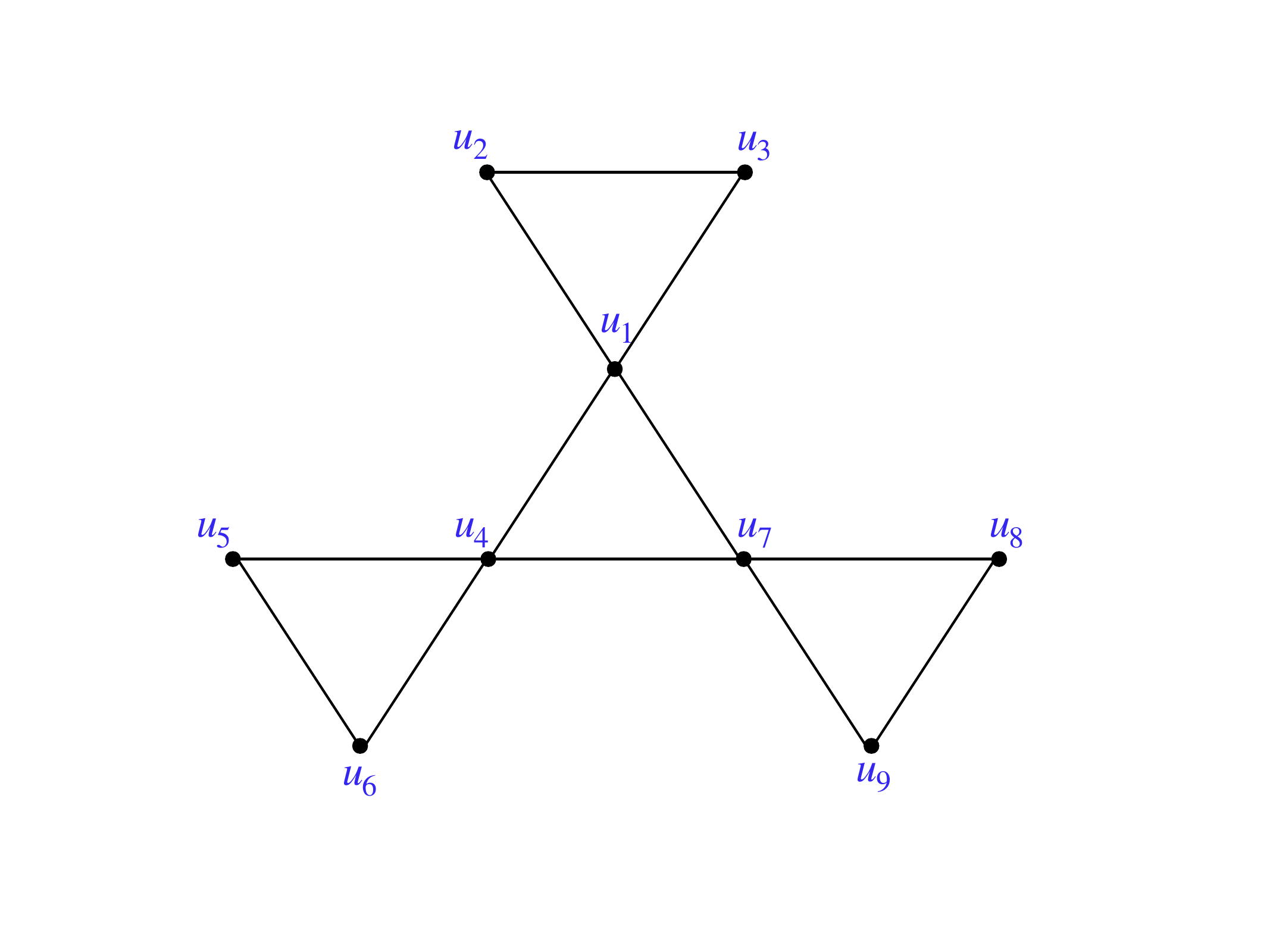}
\caption{In the figure is presented a simple gadget that achieves the property stated in Step 2 of the proof. Namely, consider the following vector set $\mathcal{S}_2$ represented by the orthogonality graph in the figure: $\langle u_1 | = (1,0,0)$, $\langle u_2 | = (0, \cos{\theta_1}, \sin{\theta_1})$, $\langle u_3 | = (0, -\sin{\theta_1}, \cos{\theta_1})$, $\langle u_4| = (0,1,0)$, $\langle u_5| =  (\cos{\theta_2},0, \sin{\theta_2})$, $\langle u_6 | = ( -\sin{\theta_2},0, \cos{\theta_2})$, $\langle u_7| = (0,0,1)$, $\langle u_8| =  (\cos{\theta_3}, \sin{\theta_3},0)$, $\langle u_9 | = (-\sin{\theta_3}, \cos{\theta_3},0)$. For simplicity we take $\theta_1 = \theta_2 = \theta_3 = \theta = \pi/3$. Consider any assignment $f: \mathcal{S}_2 \rightarrow \{0, p, 1-p\}$ with $p \leq 1/2$ and $p \neq 1/3$. To obey the KS completeness condition, it readily follows that we must have three linearly independent vectors in the set that are assigned value $0$. To be specific, any assignment $f$ must necessarily obey one of the following twelve possibilities: (i) $f(u_1) = f(u_5) = f(u_8) = 0$, (ii) $f(u_1) = f(u_5) = f(u_9) = 0$, (iii) $f(u_1) = f(u_6) = f(u_8) = 0$, (iv) $f(u_1) = f(u_6) = f(u_9) = 0$, (v) $f(u_4) = f(u_2) = f(u_9) = 0$, (vi) $f(u_4) = f(u_2) = f(u_8) = 0$, (vii) $f(u_4) = f(u_3) = f(u_9) = 0$, (viii) $f(u_4) = f(u_3) = f(u_8) = 0$, (ix) $f(u_7) = f(u_2) = f(u_6) = 0$, (x) $f(u_7) = f(u_2) = f(u_5) = 0$, (xi) $f(u_7) = f(u_3) = f(u_6) = 0$ or (xii) $f(u_7) = f(u_3) = f(u_5) = 0$. }
\end{figure}
% (we shall see that the statement readily extends to higher dimensions by the standard trick of augmenting the set with requisite computational basis vectors).
%\begin{widetext}
\begin{eqnarray}
&&\langle u_1 | = (1,0,0), \;\; \langle u_2 | = (0, \cos{\theta_1}, \sin{\theta_1}), \;\; \langle u_3 | = (0, -\sin{\theta_1}, \cos{\theta_1}) \nonumber \\
&& \langle u_4| = (0,1,0), \;\; \langle u_5| =  (\cos{\theta_2},0, \sin{\theta_2}), \;\; \langle u_6 | = ( -\sin{\theta_2},0, \cos{\theta_2}) \nonumber \\
&& \langle u_7| = (0,0,1), \;\; \langle u_8| =  (\cos{\theta_3}, \sin{\theta_3},0), \;\; \langle u_9 | = (-\sin{\theta_3}, \cos{\theta_3},0). \nonumber \\ 
\end{eqnarray}
%\end{widetext}
For simplicity we may take $\theta_1 = \theta_2 = \theta_3 = \theta = \pi/3$, say.
Consider any outcome assignment to this set $\mathcal{S}_2$ in $\mathcal{O} \setminus \{1\}$, i.e., $f: \mathcal{S}_2 \rightarrow \{0,p,1-p\}$ with $p \leq 1/2, p \neq 1/3$. Since each row as well as the first column corresponds to a basis in three dimensions, it follows that to comply with the KS rules, one vector from each row as well as the first column must necessarily be assigned value $0$ with the other vectors being assigned values $p, 1-p$. That is, it holds that in any assignment $f: \mathcal{S}_2 \rightarrow \{0,p, 1-p\}$, we must have $f(|u_1 \rangle) = 0 \wedge \left[f(|u_5 \rangle) = 0 \vee f(|u_6 \rangle) = 0\right] \wedge \left[f(|u_8 \rangle) = 0 \vee f(|u_9 \rangle) = 0 \right]$ or permutations thereof. To be precise, it holds that one of 12 linearly independent triples (listed below Fig. \ref{fig:Lem3}) has the property that all vectors in the triple are assigned value $0$ in any outcome assignment from the set $\mathcal{O} \setminus \{1\}$. \\ 

%\begin{remark}
%Consider the finite set of vectors $\mathcal{S}_2 \subset \mathbb{C}^d$ for $d \geq 3$ shown in Fig. \ref{}. For any $\{0, 1/2\}$ outcome assignment $f: \mathcal{S}_2 \rightarrow \{0, 1/2\}$ it holds that 

%\end{remark}

\subsection{Assignments that set $d$ linearly independent vectors to zero are identically zero} 
In this subsection, we build on Step 2 to construct the finite vector set that does not admit an outcome assignment in $\mathcal{O} \setminus \{1\}$. In conjunction with the Step 1 (by augmenting with finite vector sets ruling out value $1$ for each of the vertices), this set will prove the KS theorem for $\mathcal{O}$-valued outcome assignments.  

Our construction in this case is shown in Fig. \ref{fig:Step3}. Consider the set of vectors $\mathcal{S}_3$ in dimension three and their orthogonality graph shown in the figure and any outcome assignment $f: \mathcal{S}_3 \rightarrow \mathcal{O} \setminus \{1\}$ with the property that either (i) $f(w_1) = f(w_2) = f(w_3) = 0$ or (ii) $f(w_1) = f(w'_2) = f(w_3) = 0$. Here $| w_1 \rangle = (1,0,0)^T$, $| w_2 \rangle = (\cos{\theta}, \sin{\theta}, 0)^T$, $| w_3 \rangle = (\cos{\theta}, 0, \sin{\theta})^T$ and $| w'_2 \rangle = (-\sin{\theta}, \cos{\theta}, 0)^T$ with $\theta = \pi/3$. We recognize these vectors as the triples $(| u_1 \rangle, | u_8 \rangle, | u_5 \rangle)$ and $(| u_1 \rangle, | u_9 \rangle, | u_5 \rangle)$ from Step 2. 

Firstly, we note that an orthogonal representation in $\mathbb{R}^3$ exists for the graph in Fig. \ref{fig:Step3} (and consequently its induced subgraph in Fig. \ref{fig:Step3-2}). Namely, we fix $|w_1 \rangle = (1,0,0)^T$, $|w_2 \rangle = ((\cos{\theta}, \sin{\theta}, 0)^T$, $| w_3 \rangle = (\cos{\theta}, 0, \sin{\theta})^T$ and $| w'_2 \rangle = (-\sin{\theta}, \cos{\theta}, 0)^T$ with $\theta = \pi/3$. We first construct the set $\mathcal{S}^{(1)}_3$ consisting of the following set of vectors $|w_4 \rangle = (0, \cos{\phi_1}, \sin_{\phi_1})^T$, $|w_5 \rangle = (0, -\sin{\phi_1}, \cos{\phi_1})^T$,  $|w_6 \rangle = (0, \cos{\phi_2}, \sin{\phi_2})^T$, $|w_7 \rangle = (0, - \sin{\phi_2}, \cos{\phi_2})^T$, $|w_8 \rangle = |w_2 \rangle \times |w_4 \rangle$, $|w_9 \rangle = |w_2 \rangle \times |w_6 \rangle$, $|w_{10} \rangle = |w_3 \rangle \times |w_5 \rangle$, $|w_{11} \rangle = |w_3 \rangle \times |w_{7} \rangle$, $|w_{12} \rangle = |w_4 \rangle \times |w_8 \rangle$, $|w_{13} \rangle = |w_6 \rangle \times |w_9 \rangle$, $|w_{14} \rangle = |w_5 \rangle \times |w_{10} \rangle$, $|w_{15} \rangle = |w_7 \rangle \times |w_{11} \rangle$, $|w_{34} \rangle = |w_3 \rangle \times |w_{13} \rangle$. Here $\times$ denotes the usual vector cross product. We set $\theta = \pi/3$ and choose parameters $\phi_1, \phi_2$ to enforce the remaining orthogonality constraints: $\langle w_8 | w_9 \rangle = 0$, $\langle w_{10} | w_{11} \rangle = 0$ and $\langle w_{13} | w_{3} \rangle = 0$. The parameters $\phi_1, \phi_2$ are obtained analytically as the arctangents of the roots of algebraic equations of degree $4$, to avoid clutter we give their numerical values $\phi_1 \approx 5.7036$ and $\phi_2 \approx  3.5065$. We now proceed to show that no outcome assignment $f: \mathcal{S}_3 \rightarrow \mathcal{O} \setminus \{1\}$ exists, i.e., any such assignment leads to a contradiction with the KS rules of exclusivity and completeness. It is easy to see that in case (i) when we have $f(w_1) = f(w_2) = f(w_3) = 0$ then it is necessarily the case that $f(w_{12}) = f(w_{14}) = f(w_{13}) = f(w_{15}) = 0$ (these follow from the KS completeness rule). Now, since $f(w_{3}) = f(w_{13}) = 0$ it follows that completeness rule for the basis $(w_3, w_{13}, w_{34})$ cannot be satisfied, yielding a contradiction. Therefore in this case the contradiction is already reached from the subset $\mathcal{S}^{(1)}_3$ represented by the orthogonality graph in Fig. \ref{fig:Step3-2}. In our search over non-isomorphic graphs, this was the smallest graph that we found with the property required, namely that setting three linearly independent vectors to $0$ yields a contradiction with the KS rules of exclusivity and completeness for an outcome assignment in $\mathcal{O} \setminus \{1\}$.

We proceed to show the vectors representing the remaining vertices in the graph. Namely, we construct the set $\mathcal{S}^{(2)}_3$ consisting of the following set of vectors: $|w'_2 \rangle = (-\sin{\theta}, \cos{\theta},0)^T$, $| w_{16} \rangle= (0, \cos{\phi_3}, \sin{\phi_3})^T$, $| \tilde{w}_{17} \rangle = (0, -\sin{\phi_3}, \cos{\phi_3})^T$, $| w_{18} \rangle = (0, \cos{\phi_4}, \sin{\phi_4})^T$, $| w_{19} \rangle = (0, -\sin{\phi_4}, \cos{\phi_4})^T$, $|w_{20} \rangle = |w_{12} \rangle \times | w_{18} \rangle$, $|w_{21} \rangle = | w_{18} \rangle \times |w_{20} \rangle$, $|w_{22} \rangle = |w_{14} \rangle \times |w_{17} \rangle$, $|w_{23} \rangle = |w_{17} \rangle \times |w_{22} \rangle$, $|w_{24} \rangle = |w_{12} \rangle \times |w_{20} \rangle$, $|w_{25} \rangle  = |w_{14} \rangle \times |w_{22} \rangle$, $|w_{26} \rangle = |w_{16} \rangle \times |w_{24} \rangle$, $|w_{27} \rangle = |w_{19} \rangle \times |w_{25} \rangle$, $|w_{28} \rangle = (\cos{\theta_5} \cos{\phi_5}, \cos{\theta_5} \sin{\phi_5}, \sin{\theta_5})^T$, $|w_{29} \rangle = |w_{27} \rangle \times |w_{28} \rangle$, $|w_{30} \rangle = |w_{28} \rangle \times |w_{29} \rangle$, $|w_{31} \rangle = |w_{26} \rangle \times |w_{28} \rangle$, $|w_{32} \rangle = |w_{27} \rangle \times |w_{29} \rangle$, $|w_{33} \rangle = |w_9 \rangle \times |w_{30} \rangle$, and $|w_{35} \rangle = |w_2 \rangle \times |w'_2 \rangle$. Having fixed $\theta = \pi/3$, we choose the parameters $\phi_3, \phi_4, \phi_5, \theta_5$ to satisfy the remaining orthogonality constraints, namely $\langle w_{24} | w_{16} \rangle = 0$, $\langle w_{19} | w_{25} \rangle = 0$, $\langle w_{26} | w_{28} \rangle = 0$, $\langle w_9 |w_{30} \rangle = 0$. Again the angles $\phi_3, \phi_4, \phi_5, \theta_5$ are obtained analytically as the arctangents of the roots of algebraic equations of high degree, to avoid clutter we present their numerical values: $\theta_5 = 0$, $\phi_3 \approx 5.0234$, $\phi_4 \approx 0.5886$ and $\phi_5 \approx 2.1829$. An alternative approach to obtain a perhaps less unwieldy representation is to postulate a triple of real variables $(x_i, y_i, z_i)$ for each vertex $w_i$, express the constraints using polynomial equalities, namely $x_i^2 + y_i^2 + z_i^2 = 1$ and $x_i x_j + y_i y_j + z_i z_j = 0$ for every pair of adjacent vertices $w_i \sim w_j$, and solving the constraint system over the reals. 

Now, we proceed to show that no outcome assignment $f: \mathcal{S}_3 \rightarrow \mathcal{O} \setminus \{1\}$ exists obeying the case (ii) when we have $f(w_1) = f(w'_2) = f(w_3) = 0$. Again, applying the KS rules of exclusivity and completeness, we obtain in this case that $f(w_{14}) = f(w_{15}) = 0$ and either $f(w_8) = f(w_{13}) = 0$ or $f(w_9) = f(w_{12}) = 0$. When $f(w_{13}) = 0$ as we have already seen we reach a contradiction for the basis $(w_3, w_{13}, w_{34})$. In the case when $f(w_9) = f(w_{12}) = 0$, we obtain again by KS completeness that $f(w_{21}) = f(w_{26}) = f(w_{27}) = f(w_{23}) = 0$ which finally gives $f(w_{30}) = 0$. Now, since $f(w_{30}) = f(w_{9}) = 0$ it follows that the completeness rule for the basis $(w_{30}, w_{9}, w_{33})$ cannot be satisfied, yielding a contradiction.

The contradiction for the triples $(| u_1 \rangle, | u_9 \rangle, | u_6 \rangle)$ and $(| u_1 \rangle, | u_8 \rangle, | u_6 \rangle)$ can be achieved by a strictly analogous construction by considering the set $U^{u_1}_{\pi/2}|v \rangle$ for $|v\rangle \in \mathcal{S}_3$, i.e., by considering the vectors in the set $\mathcal{S}_3$ rotated by $\pi/2$ about the axis $| u_1 \rangle = (1,0,0)^T$. We denote this set as $U^{u_1}_{\pi/2} \mathcal{S}_3$.

The contradiction for the triples $(| u_4 \rangle, | u_9 \rangle, | u_2 \rangle)$, $(| u_4 \rangle, | u_8 \rangle, | u_2 \rangle)$, $(| u_4 \rangle, | u_9 \rangle, | u_3 \rangle)$ and $(| u_4 \rangle, | u_8 \rangle, | u_3 \rangle)$ can be achieved by an analogous construction considering the set $U^{u_7}_{\pi/2}|v \rangle$ for $|v \rangle \in \mathcal{S}_3 \bigcup U^{u_1}_{\pi/2} \mathcal{S}_3$, i.e., by considering the vectors in the set $\mathcal{S}_3 \bigcup U^{u_1}_{\pi/2} \mathcal{S}_3$ rotated by $\pi/2$ about the axis $| u_7 \rangle= (0,0,1)^T$. We denote this set as $U^{u_7}_{\pi/2} \left(\mathcal{S}_3 \bigcup U^{u_1}_{\pi/2} \mathcal{S}_3 \right)$. 

Finally, the contradiction for the triples $(| u_7 \rangle, | u_3 \rangle, | u_6 \rangle)$, $(| u_7 \rangle, | u_2 \rangle, | u_6 \rangle)$, $(| u_7 \rangle, | u_3 \rangle, | u_5 \rangle)$ and $(| u_7 \rangle, | u_2 \rangle, | u_5 \rangle)$ can be achieved by an analogous construction considering the set $U^{u_4}_{\pi/2}|v \rangle$ for $|v \rangle \in \mathcal{S}_3 \bigcup U^{u_1}_{\pi/2} \mathcal{S}_3$, i.e., by considering the vectors in the set $\mathcal{S}_3 \bigcup U^{u_1}_{\pi/2} \mathcal{S}_3$ rotated by $\pi/2$ about the axis $| u_4 \rangle = (0,1,0)^T$. We denote this set as $U^{u_4}_{\pi/2} \left(\mathcal{S}_3 \bigcup U^{u_1}_{\pi/2} \mathcal{S}_3 \right)$. The proof of Thm. \ref{thm:main} is completed by taking the union of these sets $\mathcal{S}_f :=\left(\mathcal{S}_3 \bigcup U^{u_1}_{\pi/2} \mathcal{S}_3 \right) \bigcup U^{u_7}_{\pi/2} \left(\mathcal{S}_3 \bigcup U^{u_1}_{\pi/2} \mathcal{S}_3 \right) \bigcup U^{u_4}_{\pi/2} \left(\mathcal{S}_3 \bigcup U^{u_1}_{\pi/2} \mathcal{S}_3 \right)$ along with the set $\mathcal{S}_1^v$ for each of the vectors $|v \rangle \in \mathcal{S}_f$.

%(as before when considering the set in a higher dimension $d > 3$ the set is to be augmented with the computational basis vectors $(0,0,0,1,0,\ldots, 0)^T, \ldots, (0,0,0,0,\ldots, 0,1)^T$ which are all orthogonal to the vectors shown in the figure).    
%Applying the KS rules of exclusivity and completeness 

\begin{figure}[h]
%\label{fig:Step3}
%\centering
%\centering
\subfigure[Gadget to prove Step 3]{\label{fig:Step3}\includegraphics[width=.6\columnwidth]{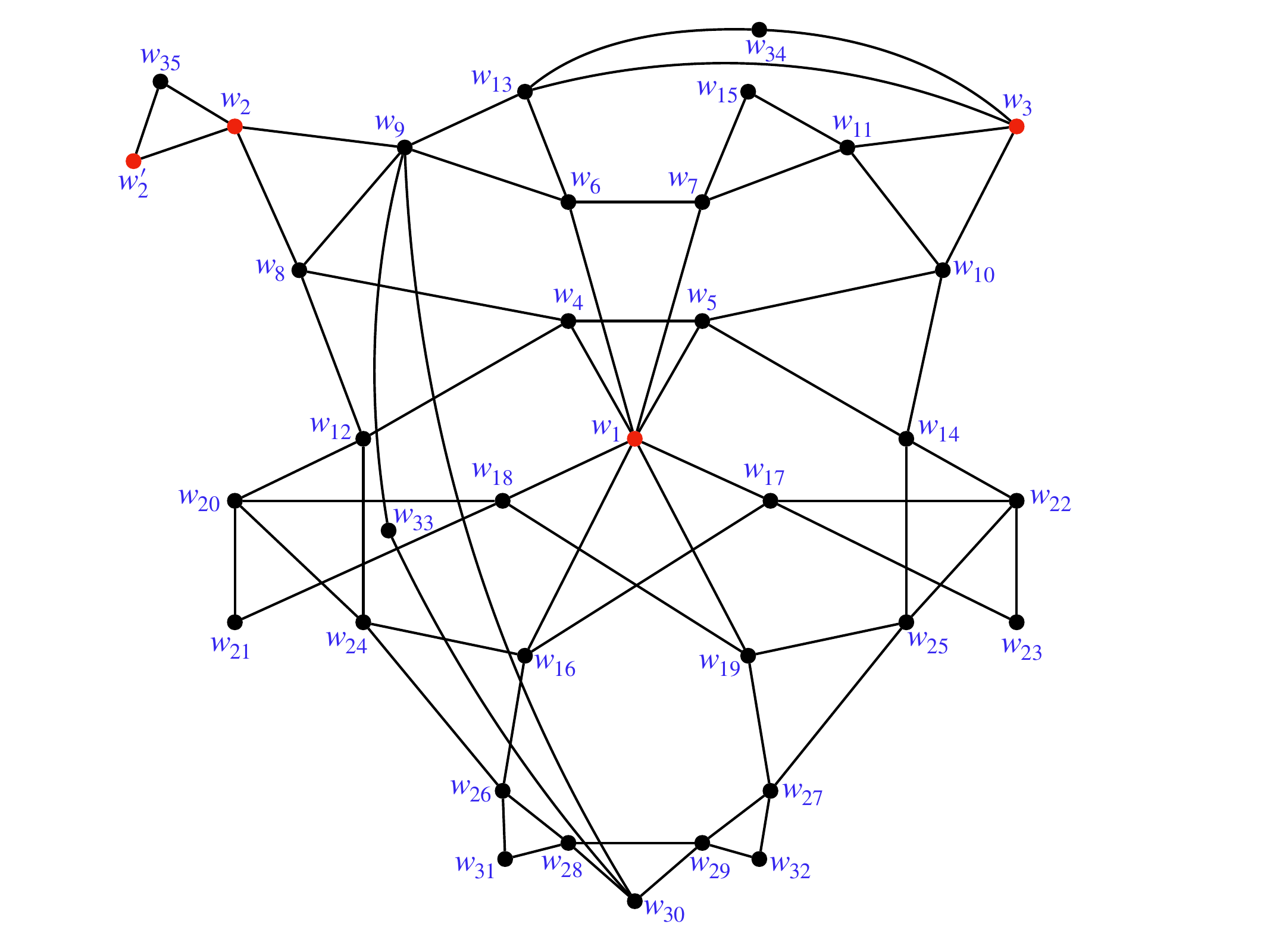}}%\hspace{4em}
\subfigure[Smallest gadget for which a $\{0,1/2,1\}$ assignment leads to contradiction]{\label{fig:Step3-2}\includegraphics[width=.38\columnwidth]{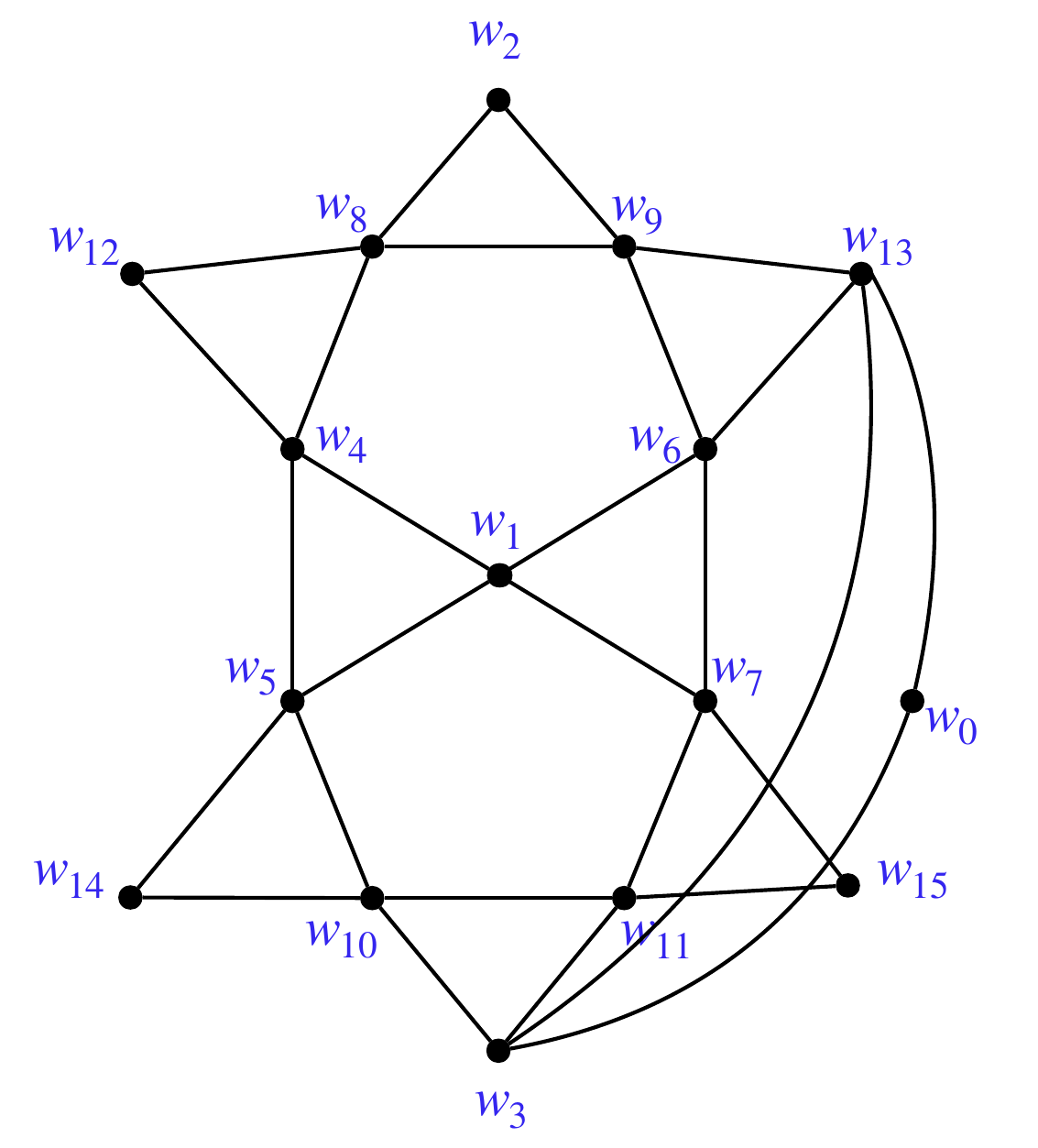}}\hspace{4em}
\caption{(a) In the figure on the left is presented a gadget that achieves the property stated in Step 3, namely that any assignment from $\mathcal{O} \setminus \{1\}$ that sets either of two triples $(|w_1 \rangle, |w_2 \rangle, |w_3 \rangle)$ or $(|w_1 \rangle, |w'_2 \rangle, |w_3 \rangle)$ to $0$ leads to a contradiction. This is seen through two cases: (i) when $f(w_1) = f(w_2) = f(w_3) = 0$ and (ii) when $f(w_1) = f(w'_2) = f(w_3) = 0$. Consider the following (non-normalised) vector set $\mathcal{S}^{(1)}_3 \cup \mathcal{S}^{(2)}_3$ represented by the orthogonality graph in the figure, where $\mathcal{S}^{(1)}_3$ is the set of the following vectors: $\langle w_1| = (1,0,0)$, $\langle w_2| = ( \cos{\theta}, \sin{\theta}, 0)$, $\langle w_3 | = (\cos{\theta}, 0, \sin{\theta})$, $\langle w_4| = (0, \cos{\phi_1}, \sin{\phi_1})$, $\langle w_5| = (0, -\sin{\phi_1}, \cos{\phi_1})$, $\langle w_6| = (0, \cos{\phi_2}, \sin{\phi_2})$, $\langle w_7|= (0, - \sin{\phi_2}, \cos{\phi_2})$, $\langle \tilde{w}_8| = (\sin{\theta} \sin{\phi_1}, -\cos{\theta} \sin{\phi_1}, \cos{\theta} \cos{\phi_1})$, $\langle \tilde{w}_9| = (\sin{\theta} \sin{\phi_2}, -\cos{\theta} \sin{\phi_2}, \cos{\theta} \cos{\phi_2})$, $\langle \tilde{w}_{10}| = (\sin{\theta} \sin{\phi_1}, -\cos{\theta} \cos{\phi_1}, -\cos{\theta} \sin{\phi_1})$, $\langle \tilde{w}_{11}| = (\sin{\theta} \sin{\phi_2}, -\cos{\theta} \cos{\phi_2}, -\cos{\theta} \sin{\phi_2})$, $\langle \tilde{w}_{12}| = (\cos{\theta}, \sin{\theta} \sin^2{\phi_1}, -\sin{\theta} \sin{\phi_1} \cos{\phi_1})$, $\langle \tilde{w}_{13}| = (\cos{\theta}, \sin{\theta} \sin^2{\phi_2}, -\sin{\theta} \sin{\phi_2} \cos{\phi_2})$, $\langle \tilde{w}_{14}| = (\cos{\theta}, \sin{\theta} \sin{\phi_1} \cos{\phi_1}, \sin{\theta} \sin^2{\phi_1})$, $\langle \tilde{w}_{15}| = (\cos{\theta}, \sin{\theta} \sin{\phi_2} \cos{\phi_2}, \sin{\theta} \sin^2{\phi_2})$, $|w_{34} \rangle = |w_{3} \rangle \times |w_{13} \rangle$. Now consider any assignment $f$ from the set $\{0, p, 1-p\}$ with $p \neq 1/3, p \leq 1/2$.  It is easy to see that in case (i) when we have $f(w_1) = f(w_2) = f(w_3) = 0$ then applying the KS completeness rule we obtain that $f(w_{12}) = f(w_{14}) = f(w_{13}) = f(w_{15}) = 0$. Now, since $f(w_{3}) = f(w_{13}) = 0$ it follows that completeness rule for the basis $(w_3, w_{13}, w_{34})$ cannot be satisfied, yielding a contradiction. Similarly, $\mathcal{S}^{(2)}_3$ is the set consisting of the following vectors: $|w'_2 \rangle = (-\sin{\theta}, \cos{\theta},0)^T$, $| w_{16} \rangle= (0, \cos{\phi_3}, \sin{\phi_3})^T$, $| \tilde{w}_{17} \rangle = (0, -\sin{\phi_3}, \cos{\phi_3})^T$, $| w_{18} \rangle = (0, \cos{\phi_4}, \sin{\phi_4})^T$, $| w_{19} \rangle = (0, -\sin{\phi_4}, \cos{\phi_4})^T$, $|w_{20} \rangle = |w_{12} \rangle \times | w_{18} \rangle$, $|w_{21} \rangle = | w_{18} \rangle \times |w_{20} \rangle$, $|w_{22} \rangle = |w_{14} \rangle \times |w_{17} \rangle$, $|w_{23} \rangle = |w_{17} \rangle \times |w_{22} \rangle$, $|w_{24} \rangle = |w_{12} \rangle \times |w_{20} \rangle$, $|w_{25} \rangle  = |w_{14} \rangle \times |w_{22} \rangle$, $|w_{26} \rangle = |w_{16} \rangle \times |w_{24} \rangle$, $|w_{27} \rangle = |w_{19} \rangle \times |w_{25} \rangle$, $|w_{28} \rangle = (\cos{\theta_5} \cos{\phi_5}, \cos{\theta_5} \sin{\phi_5}, \sin{\theta_5})^T$, $|w_{29} \rangle = |w_{27} \rangle \times |w_{28} \rangle$, $|w_{30} \rangle = |w_{28} \rangle \times |w_{29} \rangle$, $|w_{31} \rangle = |w_{26} \rangle \times |w_{28} \rangle$, $|w_{32} \rangle = |w_{27} \rangle \times |w_{29} \rangle$, $|w_{33} \rangle = |w_9 \rangle \times |w_{30} \rangle$, and $|w_{35} \rangle = |w_2 \rangle \times |w'_2 \rangle$. We choose the parameters $\theta = \pi/3$, $\phi_1, \phi_2, \phi_3, \phi_4, \phi_5, \theta_5$ to satisfy the remaining orthogonality constraints, namely $\langle w_{24} | w_{16} \rangle = 0$, $\langle w_{19} | w_{25} \rangle = 0$, $\langle w_{26} | w_{28} \rangle = 0$, $\langle w_9 |w_{30} \rangle = 0$, $\langle w_8 | w_9 \rangle = 0$, $\langle w_{10}| w_{11} \rangle = 0$, $\langle w_{13} | w_{3} \rangle = 0$.
%$\langle \tilde{w}_{20}| = (\cos{\phi_1 - \phi_4} \sin{\theta} \sin{\phi_1}, -\cos{\theta} \sin{\phi_4}, \cos{\theta} \cos{\phi_4})$, $\langle \tilde{w}_{21}| = (-\cos{\theta}, -cos{\phi_1-\phi_4} \sin{\theta} \sin{\phi_1} \sin{\phi_4}, \cos{\phi_1 - \phi_4} \sin{\theta} \sin{\phi_1} \cos{\phi_4})$, $\langle \tilde{w}_{22}| = (\sin{\theta} \sin{\phi_1} \cos{\phi_1 - \phi_3}, -\cos{\theta} \cos{\phi_3}, -\cos{\theta} \sin{\phi_3})$, $\langle \tilde{w}_{23}| = (\cos{\theta}, \sin{\theta} \sin{\phi_1} \cos{\phi_3} \cos{\phi_1 - \phi_3}, \sin{\theta} \sin{\phi_1} \sin{\phi_3} \cos{\phi_1 - \phi_3})$, 
Now in the case (ii) when we have $f(w_1) = f(w'_2) = f(w_3) = 0$, it is necessarily the case (to satisfy the KS conditions) that $f(w_{14}) = f(w_{15}) = 0$ and either $f(w_8) = f(w_{13}) = 0$ or $f(w_9) = f(w_{12}) = 0$. When $f(w_{13}) = 0$ as we have already seen we reach a contradiction for the basis $(w_3, w_{13}, w_{34})$. In the case when $f(w_9) = f(w_{12}) = 0$, we obtain again by KS completeness that $f(w_{21}) = f(w_{26}) = f(w_{27}) = f(w_{23}) = 0$ which finally gives $f(w_{30}) = 0$. Now, since $f(w_{30}) = f(w_{9}) = 0$ it follows that the completeness rule for the basis $(w_{30}, w_{9}, w_{33})$ cannot be satisfied, yielding a contradiction. (b) In the figure on the right is presented the smallest gadget (a subgraph of the graph on the left) with the property that any assignment $f: V(G) \rightarrow \{0,1/2,1\}$ satisfying $f(w_1) = f(w_2) = f(w_3) = 0$ leads to a contradiction, namely that the completeness rule for the basis $(w_3, w_{13}, w_0)$ cannot be satisfied. }
\end{figure}

\section{Applying Theorem \ref{thm:main} to constructing PT games unwinnable by PR-type boxes}

In this subsection, we will consider bipartite Bell inequalities or two-player non-local games that test for quantum non-locality rather than the single system quantum contextuality considered in the rest of the Appendix. Consider a bipartite Bell scenario in which two (non-communicating) players, Alice and Bob, choose measurement settings $x \in X$ and $y \in Y$ respectively, and obtain corresponding outcomes $a \in A$ and $b \in B$ respectively. Let $\{P_{A,B|X,Y}(a,b|x,y)\}$ denote the corresponding input-output behavior (the set of conditional probabilities of outputs given inputs) observed by the players. The winning probability $\omega(G)$ in a general two-player non-local game $G$ takes the form
\begin{equation}
\omega(G) := \sum_{\substack{x \in X\\, y \in Y}} \sum_{\substack{a \in A\\, b \in B}} \pi_{X,Y}(x,y) V(a,b,x,y) P_{A,B|X,Y}(a,b|x,y),
\end{equation}
where  $\pi_{X,Y}(x,y)$ denotes the distribution of inputs, $V(a,b,x,y) \in \{0,1\}$ is a predicate that denotes the winning condition (the condition that inputs and outputs should satisfy to win the game) and $\omega_c(G)$ denotes the maximum value achieved by classical (local hidden variable) strategies for the game $G$. A game $G$ is said to be a Pseudo-telepathy (PT) game if it allows for a perfect quantum strategy, i.e., if a quantum behavior $\{P^{(Q)}_{A,B|X,Y}(a,b|x,y)\}$ exists to allow Alice and Bob to win every round of $G$. Specifically, we have that $V(a,b,x,y) = 1$ for all $(a,b,x,y)$ such that $P^{(Q)}_{A,B|X,Y}(a,b|x,y) \neq 0$. Equivalently we have that the quantum winning probability is $\omega_q(G) = 1$ while no perfect classical strategy exists to allow the players to win every round, i.e., $\omega_c(G) < 1$. Quantum PT games are qualitatively different than statistical proofs of non-locality such as the CHSH game, and have been found to play a crucial role in the proofs of some fundamental results such as the quantum computational advantage for shallow circuits \cite{BGK18} and MIP* = RE \cite{JNVWY21}. Formally, we have
\begin{definition}[\cite{RW04}]
Let $| \psi \rangle \in \mathbb{H}_1 \otimes \mathbb{H}_2$ be a pure state. A Pseudo-Telepathy (PT) game with respect to $| \psi \rangle$ is a pair $(X, Y)$ where $X (Y)$ is a set of orthonormal bases of $\mathbb{H}_1 (\mathbb{H}_2)$ such that the following holds. Let $h$ be the function on $X \times Y$ defined as follows: $h(x, y)$ is the set of pairs $(|v_1 \rangle, |v_2 \rangle) \in x \times y$ for which $\langle \psi | v_1, v_2 \rangle \neq 0$. Then we have that for every classical strategy - pair of deterministic functions $(c_1, c_2)$ where $c_1 (c_2)$ is defined on $X (Y)$ and $c_1(x) \in x \; (\text{similarly} \; c_2(y) \in y)$ - there must exist specific bases $(x^*, y^*) \in X \times Y$ such that $(c_1(x^*), c_2(y^*)) \notin h(x^*, y^*)$. 
\end{definition}
In other words in the definition, we postulate the function $h$ on pairs of measurement bases which pinpoints for each input pair $x, y$ the set of outcomes $a = v_1, b = v_2$ that occur with non-zero probability in the optimal quantum strategy. As stated earlier, this set of outcomes defines the winning condition in the game, i.e., $V(a,b,x,y) = 1$ for all $(a,b,x,y)$ such that $P^{(Q)}_{A,B|X,Y}(a,b|x,y) \neq 0$. Then the game is PT if every classical strategy fails to satisfy the winning condition for at least one input pair $(x^*, y^*)$. That is, a classical strategy - which is defined by a pair of local deterministic assignments $c_1$ and $c_2$ that pick out a single measurement outcomes $c_1(x)$ and $c_2(y)$ for each input pair $(x, y)$ - necessarily achieves $P_{A,B|X,Y}(a,b|x^*, y^*) = 0$ for all $(a,b)$ for which $V(a,b,x^*, y^*) = 1$.

It is well-known since \cite{RW04} that every KS vector set in dimension $d$ can be used to construct a PT game for which the optimal winning strategy uses a maximally entangled state of local dimension $d$. However, known PT games such as the Magic Square game \cite{H+10} can be won when classical players additionally share a specific type of non-signalling behaviour known as the Popescu-Rohrlich box \cite{PR94}. The PR box is defined for $X = Y = A = B = \{0,1\}$ and is characterised by the property that it satisfies $a \oplus b = 1/2$ and has uniform marginals , i.e., $P_{A|X}(a|x) = P_{B|Y}(b|y) = 1/2$ for all $a, b, x, y$. 

In this subsection, we show that our generalized KS sets for $\{0, 1/2, 1\}$-colorings can be used to construct PT games with the novel property that they cannot be won even when classical players additionally have access to the resource of a (single copy of) PR-type behaviour in every game round. Specifically, we consider the sets $GKS$ that prove Thm. \ref{thm:main} from the main text with measurement bases completed, i.e., every set of mutually orthogonal vectors is completed to be part of a complete basis of $d$ mutually orthogonal vectors (a complete measurement) by augmenting the set $GKS$ with appropriate vectors. It follows that the addition of vectors preserves the property of the set of being non-$\{0,1/2,1\}$-colorable (since the underlying set already possesses this property).

Formally, we present the proof of Prop. \ref{prop:PT-games} from the main text which we formally elaborate as the following Proposition. 
\begin{proposition*}
\label{prop:PT-games-2}
Let $\mathbb{H}_1 = \mathbb{H}_2 = \mathbb{C}^3$ and let $|\phi \rangle$ denote the maximally entangled state in $\mathbb{H}_1 \otimes \mathbb{H}_2$, i.e., $|\phi \rangle := \frac{1}{\sqrt{3}} \left(|00 \rangle + |11 \rangle + |22 \rangle \right)$. Let $GKS \subset \mathbb{C}^3$ denote the generalized KS set from Theorem \ref{thm:main} for the case $\mathcal{O} = \{0, 1/2, 1\}$ with all measurement bases completed. Let us define
\begin{equation}
X = Y :=\big \{z \subset GKS \big| z \; \text{is an orthonormal basis of} \; \mathbb{C}^3 \big\}.
\end{equation} 
Then $(X, \bar{Y})$ is a PT game $G_{GKS}$ with respect to $|\phi \rangle$ with the property that $G_{GKS}$ cannot be won even when classical players additionally have access to the resource of a single copy of a PR-type behaviour in every game round.
\end{proposition*}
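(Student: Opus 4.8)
The plan is to establish the two defining properties of a PT game — existence of a perfect quantum strategy, and non-existence of a perfect classical strategy even augmented with a PR-type box — separately, reducing each to properties already established for the generalised KS set $GKS$.

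First I would verify the quantum side. With the maximally entangled state $|\phi\rangle = \frac{1}{\sqrt 3}\sum_i |ii\rangle$ in $\mathbb{C}^3 \otimes \mathbb{C}^3$, the standard identity $(A \otimes \mathds{1})|\phi\rangle = (\mathds{1} \otimes A^T)|\phi\rangle$ gives, for rank-one projectors $|v_1\rangle\langle v_1|$ and $|v_2\rangle\langle v_2|$, the correlation $\langle \phi | (|v_1\rangle\langle v_1| \otimes |v_2\rangle\langle v_2|) |\phi \rangle = \tfrac{1}{3}|\langle v_1 | \bar{v_2}\rangle|^2$. Taking $\bar Y$ (the complex-conjugated bases) on Bob's side, this is $\tfrac13|\langle v_1|v_2\rangle|^2$, which vanishes exactly when $|v_1\rangle \perp |v_2\rangle$. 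Hence defining the winning predicate via $h(x,y)$ as the pairs with nonzero quantum weight, Alice and Bob measuring in their respective bases on $|\phi\rangle$ win with certainty; this reproduces the Renner–Wolf correspondence and gives $\omega_q(G_{GKS}) = 1$.

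Next I would handle the classical-plus-PR-box side, which is the heart of the statement. The key reduction is: a deterministic classical strategy assigns to each basis (context) a single outcome, i.e., a $\{0,1\}$ assignment with exactly one $1$ per maximum clique — precisely a $\{0,1\}$-coloring of $GKS$; augmenting with a single PR-type box upgrades this to each party holding correlations whose marginals lie in $\{0,1/2,1\}$. Using Proposition~\ref{prop:FB-ext} and the fundamentally-binary interpretation from Appendix B, I would argue that a winning strategy using one shared PR-type box induces, on the measurement outcomes, a consistent non-contextual assignment taking values in $\{0,1/2,1\}$ — a $\{0,1/2,1\}$-coloring of $GKS$. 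Since $GKS$ is, by Theorem~\ref{thm:main}, \emph{not} $\{0,1/2,1\}$-colorable, no such winning strategy can exist, so $\omega_c(G_{GKS}) < 1$ even with the PR-box resource.

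The main obstacle I anticipate is making the middle reduction airtight: one must show precisely how sharing a \emph{single} PR-type box, combined with local deterministic post-processing, forces the effective single-party outcome statistics to be a consistent $\{0,1/2,1\}$-valued assignment on the orthogonality graph, respecting both exclusivity within each clique and normalisation (completeness) on each maximum clique. The subtlety is that the box couples Alice and Bob, so I would need to trace out / condition on one party's input-output to show the induced behaviour on the shared vector set $GKS$ (which sits symmetrically in $X$ and $\bar Y$) is non-contextual and has the claimed marginal structure — this is where the characterisation $P^{\text{ext}}_{A|X}(a|x) \in \{0,1/2,1\}$ of Proposition~\ref{prop:FB-ext} does the real work, since a PR-type box with marginals in $\{0,1/2,1\}$ is exactly a fundamentally binary resource. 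Once that structural correspondence is established, the contradiction with non-$\{0,1/2,1\}$-colorability is immediate, completing the proof. $\blacksquare$
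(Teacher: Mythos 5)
Your skeleton matches the paper's proof --- quantum side via the identity $\langle \phi | v, \bar{v}' \rangle = \tfrac{1}{\sqrt{3}}\overline{\langle v | v' \rangle}$, classical-plus-box side via linearity/extremality and a reduction to a $\{0,1/2,1\}$-coloring contradicting Theorem~\ref{thm:main} --- but the step you flag as ``the main obstacle'' is not a technicality you can defer: it is the entire content of the proof, and the tool you propose for it (Proposition~\ref{prop:FB-ext} and the fundamentally-binary interpretation) does not fill it. Proposition~\ref{prop:FB-ext} characterizes the extreme points of the \emph{single-party} polytope of consistent (non-disturbing) behaviours on an orthogonality graph; a shared \emph{bipartite} PR-type box wired to local deterministic strategies is a different object, and nothing in that proposition tells you that the induced single-party statistics are consistent, i.e.\ non-contextual. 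Indeed, for a general (non-winning) strategy they need not be: Alice's marginal $P_{A|X}(v|x)$ can depend on which basis $x \ni v$ she measures. What forces non-contextuality is the \emph{perfection} of the strategy together with the orthogonality structure of the winning predicate, and that argument is absent from your proposal.

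Concretely, the paper closes the gap as follows. First, the marginals-in-$\{0,1/2,1\}$ property needs no appeal to Appendix B: the polytope of classical strategies supplemented by one PR-type box has as extreme points local deterministic wirings of deterministic behaviours (marginals in $\{0,1\}$) and of PR-type boxes (marginals in $\{0,1/2,1\}$ by definition), and since the winning probability is linear, a perfect strategy can be taken extremal, giving $P^{opt}_{A|X}(a|x),\, P^{opt}_{B|Y}(b|y) \in \{0,1/2,1\}$. Second, because $\langle \phi | v, \bar{v}' \rangle \propto \overline{\langle v | v' \rangle}$, the winning predicate vanishes on every pair of orthogonal vectors, so a perfect strategy must satisfy $P(a,b|x,y) = 0$ whenever $a \perp b$. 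Taking $x = y = m$ kills all off-diagonal terms and gives $f_A = f_B =: f$; taking two bases $x, y$ sharing a vector $v$ gives $P_{A|X}(v|x) = P(v,v|x,y) = P_{B|Y}(v|y)$, which is exactly the context-independence of $f$. Normalization over each basis gives the completeness rule, and exclusivity is then automatic because every orthogonal set in $GKS$ has been completed to a basis. Only at this point is $f$ a $\{0,1/2,1\}$-coloring of $GKS$, contradicting Theorem~\ref{thm:main}. A smaller slip in the same direction: a deterministic classical strategy is a context-dependent choice function $c_1(x) \in x$, not ``precisely a $\{0,1\}$-coloring''; only \emph{perfect} strategies are forced by the winning condition to be non-contextual, and that implication is again the step your proposal leaves unproven.
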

\begin{proof}
The proof follows straightforwardly from the standard construction of PT games using KS sets and from the property of the set $GKS$ that it does not admit a $\{0, 1/2, 1\}$-outcome assignment. 
%Formally, consider any strategy for classical players supplemented with a non-local box with marginals in $\{0,1/2, 1\}$ that wins the game $G_{GKS}$. 
%The crucial observation is that such a strategy defines a $\{0, 1/2, 1\}$-coloring for the set $GKS$ measured by each player. 
We first observe that the set of classical behaviours is a polytope defined as the convex hull of deterministic behaviours, i.e. with $P_{A|X}(a|x) \in \{0, 1\}$ and $\{P_{B|Y}(b|y) \in \{0,1\}$. The set of strategies supplemented with PR-type behaviours is also a polytope defined as the convex hull of deterministic behaviours and the extremal non-local non-signalling behaviours with marginals $P_{A|X}(a|x), P_{B|Y}(b|y) \in \{0, 1/2, 1\}$.

Now suppose by contradiction that a perfect strategy exists for classical players supplemented with a non-local box with marginals in $\{0,1/2, 1\}$ for the game $G_{GKS}$. Since the winning probability of the game is linear in $P_{A|B|X,Y}(a,b|x,y)$ the optimal winning probability must be achievable by an extremal behaviour in this polytope, i.e., with  $P^{opt}_{A|X}(a|x), P^{opt}_{B|Y}(b|y) \in \{0, 1/2, 1\}$. The crucial observation is that such a strategy defines a $\{0, 1/2, 1\}$-coloring for the set $GKS$ measured by each player. 

%Consider an optimal strategy for classical players supplemented with a non-local box with marginals in $\{0,1/2, 1\}$ for the game $G_{GKS}$. As the winning probability of the game is linear in $P(a,b|x,y)$ the optimal winning probability must be achievable by an extremal behaviour in this polytope, i.e., with  $P^{opt}(a|x), P^{opt}(b|y) \in \{0, 1/2, 1\}$. 
To see this, define functions $f_A: GKS \rightarrow \{0, 1/2, 1\}$ by $f_A(|v\rangle) = P^{opt}_{A|X}(|v\rangle |x)$ and $f_A: GKS \rightarrow \{0, 1/2, 1\}$ by $f_B(|v\rangle) = P^{opt}_{B|Y}(|v\rangle |y)$. For the maximally entangled state $| \phi \rangle$ we see that
%By considering $x = y$ we see that $f_A = f_B$, since
\begin{equation}
\langle \phi | v, \bar{v'} \rangle = \frac{1}{\sqrt{3}} \sum_{k} \langle k | v \rangle \langle k | \bar{v'} \rangle = \frac{1}{\sqrt{3}} \overline{\langle v | v' \rangle}. 
\end{equation}
So that $P^{(Q)}(a,b|x,y) = 0$ for every pair of measurements $x, y$ and orthogonal vectors $a = |v \rangle, b = |v' \rangle$. From the above, by considering $x = y = m$, we see that $P^{(Q)}(a,b|x = m,y =m) = 0$ whenever $a \neq b$ giving $f_A = f_B =: f$. Furthermore, we have that for every $x \in X$ it holds that $\sum_{|v\rangle \in x} f(|v \rangle) = 1$. Since every set of mutually orthogonal vectors has been augmented to form a complete basis, the above implies that $f : GKS \rightarrow \{0,1/2,1\}$ obeys the Kochen-Specker rules and forms a valid $\{0, 1/2, 1\}$-outcome assignment of $GKS$, which is a contradiction. Therefore, $(X, \bar{Y})$ is a PT game with respect to the maximally entangled state with the requisite property.  

%Now since $GKS$ is a generalised KS set that is non-$\{0,1/2,1\}$-colorable, there must exist $|v \rangle, |v' \rangle \in GKS$ with $\langle v | v' \rangle = 0$ such that $f(|v\rangle) + f(|v'\rangle) > 1$.  

%Then there exists a pair of functions $c_1, c_2$ defined on $X$ and $Y$ respectively such that $c_1(x) \in \{0, 1/2, 1\}$ for every $x \in X$ and similarly $c_2(y) \in \{0, 1/2, 1\}$ for every $y \in Y$. Furthermore, for each $x$ we have that $\sum_{|v \rangle \in x\} c_1$

\end{proof}

%\begin{proposition}
%\label{prop:PT-games-2}
%There exist bipartite Pseudo-telepathy games that cannot be won using classical resources even when supplemented with a single copy of PR-type behaviour in every game round.
%\end{proposition}

\end{document}